\definecolor{mygray}{gray}{0.75}
\renewcommand{\dotsc}{{\ldots}}
\newcommand{\aut}{\mathcal{A}}
\newcommand{\ex}{\mathsf{ex}}
\newcommand{\M}{\mathsf{M}}
\renewcommand{\L}{\mathcal{L}}
\newcommand{\T}{\mathcal{T}}
\newcommand{\transg}{\trans^{G}}
\newcommand{\transgp}{\trans^{G'}}
\newcommand{\PM}{\mathit{PM}}
\providecommand{\tuple}[1]{\langle#1\rangle}
\providecommand{\abs}[1]{\lvert#1\rvert}
\providecommand{\set}[1]{\{#1\}}
\newcommand{\val}{\mathsf{val}}
\newcommand{\last}{\mathsf{last}}
\newcommand{\collapse}{\overline{\mathsf{obs}}}
\newtheorem{remark}{Remark}
\newtheorem{definition}{Definition}
\newcommand{\myomit}[1]{{}}
\begin{document}

\title{The Complexity of Synthesis from Probabilistic Components\thanks{
This research was supported by Austrian Science Fund (FWF) Grant No P23499- N23, FWF NFN Grant No S11407-N23 (RiSE), ERC Start grant (279307: Graph Games),
Microsoft Faculty Fellowship Award, 
NSF grants CNS 1049862 and CCF-1139011,
by NSF Expeditions in Computing project "ExCAPE: Expeditions in Computer 
Augmented Program Engineering", by BSF grant 9800096, and by gift from 
Intel.}
}
\author{
Krishnendu Chatterjee$^\dag$ \quad  Laurent Doyen$^{\S}$ \quad Moshe Y. Vardi$^{\ddag}$ \\ 
\normalsize
  $\strut^\dag$ IST Austria \\
\normalsize  $\strut^\S$ CNRS, LSV, ENS Cachan \\
\normalsize
  $\strut^\ddag$ Rice University, USA 
}

\date{}
\maketitle

\begin{abstract}
The synthesis problem asks for the automatic construction of a 
system from its specification. 
In the traditional setting, the system is ``constructed from scratch''
rather than composed from reusable components. 
However, this is rare in practice, and almost every non-trivial 
software system relies heavily on the use of libraries of reusable
components. 
Recently, Lustig and Vardi introduced \emph{dataflow} and
\emph{controlflow} synthesis from libraries of reusable
components. They proved that dataflow synthesis is undecidable,
while controlflow synthesis is decidable.
The problem of controlflow synthesis from libraries of 
\emph{probabilistic components} was considered by Nain, Lustig and Vardi, 
and was shown to be decidable for qualitative 
analysis (that asks that the specification be satisfied with probability~1).
Our main contributions for controlflow synthesis from probabilistic 
components are to establish better complexity bounds for the qualitative 
analysis problem, and to show that the more general quantitative problem 
is undecidable. 
For the qualitative analysis, we show that the problem (i)~is EXPTIME-complete 
when the specification is given as a deterministic parity word automaton, 
improving the previously known 2EXPTIME upper bound; and 
(ii)~belongs to UP $\cap$ coUP and is parity-games hard, when the 
specification is given directly as a parity condition on the components, 
improving the previously known EXPTIME upper bound.
\end{abstract}

\section{Introduction}

\noindent{\em Synthesis from existing components.} 
Reactive systems (hardware or software) are rarely built from scratch, but 
are mostly developed based on existing components. 
A component might be used in the design of multiple systems, e.g., 
function libraries, web APIs, and ASICs. 
The construction of systems from existing reusable components is an active 
research direction, with several important works, such as 
component-based construction~\cite{Sif05}, 
``interface-based design''~\cite{dAH01}, 
web-service orchestration~\cite{BCGLM03}. 
The synthesis problem asks for the automated construction of a system 
given a logical specification. 
For example, in LTL (linear-time temporal logic) synthesis, 
the specification is given in LTL and the reactive system to be constructed 
is a finite-state transducer~\cite{PR89a}.  
In the traditional LTL synthesis setting, the system is ``constructed from 
scratch'' rather than ``composed'' from existing components.  
Recently, Lustig and Vardi introduced the study of synthesis 
from reusable or existing components~\cite{LV09}.

\smallskip\noindent{\em The model and types of composition.} 
The precise mathematical model for the components and their composition 
is an important concern (and we refer the reader to~\cite{LV09,NLV14} for a 
detailed discussion).
As a basic model for a component, following~\cite{LV09}, 
we abstract away the precise details of the component and model a component 
as a {\em transducer}, i.e., a finite-state machine with outputs. 
Transducers constitute a canonical model for reactive components, abstracting 
away internal architecture and focusing on modeling input/output behavior.  
In~\cite{LV09}, two models of composition were studied, namely, 
\emph{dataflow} composition, where the output of one component becomes an 
input to another component, and {\em controlflow} composition, where at 
every point of time the control resides within a single component.
The synthesis problem for dataflow composition was shown to be 
undecidable, whereas the controlflow composition was shown to be 
decidable~\cite{LV09}.  

\smallskip\noindent{\em Synthesis for probabilistic components.}
While~\cite{LV09} considered synthesis for non-probabilistic components,
the study of synthesis for controlflow composition for probabilistic 
components was considered in~\cite{NLV14}.
Probabilistic components  are transducers with a probabilistic transition 
function, that corresponds to modeling systems where there is probabilistic 
uncertainty about the effect of input actions.
Thus the controlflow composition for probabilistic transducers aims at 
construction of reliable systems from unreliable components.  
There is a rich literature about verification and analysis of such systems, 
cf.~\cite{Var85,CY90,CY95,Var99,BK08,KNP11}, as well as about synthesis in 
the presence of probabilistic uncertainty~\cite{BGLBC04}.  

\smallskip\noindent{\em Qualitative and quantitative analysis.}
There are two probabilistic notions of correctness, namely, the 
\emph{qualitative} criterion that requires the satisfaction of 
the specification with probability~1, and the more general 
\emph{quantitative} criterion that requires the satisfaction of the 
specification with probability at least $\eta$, given $0 < \eta \leq 1$.

\smallskip\noindent{\em The synthesis questions and previous results.}
In the synthesis problem for controlflow composition, the input 
is a library $\L$ of probabilistic components, and we consider 
specifications given as parity conditions (that allow us to consider
all $\omega$-regular properties, which can express all commonly used 
specifications in verification).
The \emph{qualitative (resp., quantitative) realizability} and synthesis 
problems ask whether there exists a \emph{finite} system~$S$ built from 
the components in $\L$,  such that, regardless of the input provided by the 
external environment, the traces generated by the system~$S$ satisfy the 
specification with probability~1 (resp., probability at least $\eta$). 
Each component in the library can be instantiated an arbitrary number of times in
the construction and there is no a-priori bound on the size of the
system obtained. 
The way the specification is provided gives rise to two different problems:
(i)~\emph{embedded parity realizability}, where the specification is given in 
the form of a parity index on the states of the components; and 
(ii)~\emph{DPW realizability}, where the specification is given as a 
separate deterministic parity word automaton (DPW).
The results of~\cite{NLV14} established the decidability of the qualitative
realizability problem, namely, in EXPTIME for the embedded parity realizability 
problem and 2EXPTIME for the DPW realizability problem.
The exact complexity of the qualitative problem and the decidability and 
complexity of the quantitative problem were left open, which we study in this 
work.

\begin{table}[h]
\begin{center}
{\small
\begin{tabular}{|l@{\ }c@{\;} |*{2}{c|c|}}
\hline
\large{\strut}           &  & \multicolumn{2}{|c|}{Qualitative} & \multicolumn{2}{|c|}{Quantitative} \\
\cline{3-6}
\large{\strut}    &
                        &  \;Our Results\;    & \;Previous Results\;        &  \;Our Results\;    & \;Previous Results\; \\
\hline
Embedded Parity   &   & UP $\cap$ coUP & EXPTIME & UP $\cap$ coUP & Open \\
(with exit control) & & (Parity-games hard) & & (Parity-games hard) & \\
\hline
Embedded Parity   &   & PTIME & Not considered & PTIME & Not considered \\
(unrestricted exit control) & &  & &  & \\
\hline
DPW Specifications & & EXPTIME-c & 2EXPTIME & Undecidable & Open \\
\hline
\end{tabular}
}
\end{center}
\caption{Computational complexity of synthesis from probabilistic components.\label{tab:complexity}}
\end{table}

\smallskip\noindent{\em Our contributions.}
Our main contributions are as follows (summarized in 
Table~\ref{tab:complexity}).
\begin{compactenum}
\item We show that both the qualitative and quantitative realizability 
problems for embedded parity lie in UP $\cap$ coUP, and even the qualitative 
problem is at least parity-games hard (the parity-games problem also 
belongs to  UP $\cap$ coUP~\cite{Jur98}, and the existence of a polynomial-time 
algorithm is a major and long-standing open problem).
Moreover, we show that a special case of the quantitative embedded parity 
problem (namely, unrestricted exit control) can be solved in polynomial time:
a probabilistic component has a set of exits, and in general there is a 
constraint on the current exit and the next component to transfer the control,
and in the unrestricted exit control problem no such constraint is imposed.

\item We show that the qualitative realizability problem for DPW 
specifications is EXPTIME-complete (an exponential improvement over the 
previous 2EXPTIME result).
Finally, we show that the quantitative realizability problem  for DPW 
specifications is undecidable.
\end{compactenum}

\noindent{\em Technical contributions.} 
Our two main technical contributions are as follows.
First, for the realizability of embedded parity specifications, while the most 
natural interpretation of the problem is as a partial-observation stochastic 
game (as also considered in~\cite{NLV14}), we show that the problem can be 
reduced in polynomial time to a perfect-information stochastic game.
Second, for the realizability of DPW specifications, we consider 
partial-observation stochastic games where the strategies correspond to a 
correct composition that defines, given an exit state of a component, to 
which component the control should be transferred.
Since we aim at a finite-state system, we need to consider strategies 
with \emph{finite memory}, and since the control flow is deterministic, 
we need to consider \emph{pure} (non-randomized) strategies.
Moreover, since the composition must be independent of the internal 
executions of the components, we need to consider strategies with 
\emph{stuttering} invariance. 
For example, we consider stutter-invariant strategies that must play the 
same when observations are repeated, and collapsed stutter-invariant 
strategies that are stutter-invariant strategies but not allowed to observe 
the length of the repetitions. 
We present polynomial-time reductions for both stutter-invariant and 
collapsed stutter-invariant strategies to games with standard observation-based 
strategies. Our results establish optimal complexity results for qualitative 
analysis of partial-observation stochastic games with finite-memory 
stutter-invariant and collapsed stutter-invariant strategies, which are of 
independent interest. 
Finally, we present a polynomial reduction of the qualitative realizability 
for DPW specifications to partial-observation stochastic games with collapsed 
stutter-invariant strategies and obtain the EXPTIME-complete result.

\section{Definitions}

\smallskip\noindent{\bf Probability distributions.}
A probability distribution on a finite set $X$ is a function $f:
X\rightarrow [0,1]$ such that $\sum_{x\in X} f(x) = 1$. We use
$\dist(X)$ to denote the set of all probability distributions on set
$X$.

\subsection{Transducers} 
In this section we present the definitions of deterministic and 
probabilistic transducers, and strategies for them.

\smallskip\noindent{\bf Deterministic transducers.}
A \emph{deterministic transducer} is a tuple 
$B=  \tuple{ \Sigma_I, \Sigma_O, Q, q_0, \delta, L}$, where: 
$\Sigma_I$ is a finite input alphabet,
$\Sigma_O$ is a finite output alphabet,
$Q$ is a finite set of states,
$q_0\in Q$ is an initial state,
$L:Q\to\Sigma_O$ is an output function labeling states with output
letters, and $\delta:Q\times\Sigma_I \to Q$ is a transition
function. We define $\delta^\ast: \Sigma_I^\ast \rightarrow Q$ as
follows: 
$\delta^\ast(\epsilon) = q_0$ and for all $x\in \Sigma_I^\ast$ and 
$a\in \Sigma_I$, we have $\delta^\ast(x\cdot a) = \delta(\delta^\ast(x), a)$. 

\smallskip\noindent{\bf Probabilistic transducers.}
A {\em probabilistic transducer}, is a tuple
$\T = \tuple{ \Sigma_I, \Sigma_O, Q, q_0, \delta, F, L}$, where:
$\Sigma_I$ is a finite input alphabet,
$\Sigma_O$ is a finite output alphabet,
$Q$ is a finite set of states,
$q_0\in Q$ is an initial state,
$\delta:(Q\setminus F)\times\Sigma_I \to \dist(Q)$ is a probabilistic 
transition function,
$F\subseteq Q$ is a set of exit states, and
$L:Q\to\Sigma_O$ is an output function labeling states with output
letters. Note that there are no transitions out of an exit state. If $F$
is empty, we say $\T$ is a probabilistic transducer without exits. 
Note that deterministic transducers can be viewed as a special case of probabilistic 
transducers.

\smallskip\noindent{\bf Strategies for transducers.}
Given a probabilistic transducer 
$M = \tuple{\Sigma_I, \Sigma_O, Q, q_0, \delta, F, L}$, a
\emph{strategy} for $M$ is a function $f: Q^{+} \rightarrow
\dist(\Sigma_I)$ that probabilistically chooses an input for each finite 
sequence of states. 
We denote by $\calf$ the set of all strategies.
A strategy is memoryless if the
choice depends only on the last state in the sequence. A memoryless
strategy can be written as a function $g:Q\rightarrow\dist(\Sigma_I)$. 
A strategy is \emph{pure} if the choice is
deterministic. A pure strategy is a function $h: Q^{+} \rightarrow
\Sigma_I$, and a memoryless and pure strategy is a function $h: Q\rightarrow
\Sigma_I$. 

\smallskip\noindent{\bf Probability measure.} 
A strategy $f$ along with a probabilistic transducer $M$,
with set of states $Q$, induces a probability distribution on
$Q^\omega$, denoted $\mu_f$. By standard measure-theoretic arguments, it 
suffices to define $\mu_f$ for the cylinders of $Q^\omega$, which are sets of 
the form $\beta \cdot Q^\omega$,
where $\beta \in Q^\ast$. First we extend $\delta$ to exit states as
follows: for $a\in \Sigma_I$ and $q\in F$, $q' \in Q$, let
$\delta(q,a)(q) = 1$ and $\delta(q,a)(q') = 0$ if $q' \neq q$. 
Then we define $\mu_f(q_0\cdot Q^\omega) = 1$, and for  $\beta \in Q^\ast$, 
$q, q'\in Q$, we have $\mu_f(\beta qq'\cdot Q^\omega) = 
\mu_f(\beta q) \cdot \sum_{a\in \Sigma_I} (f(\beta q)(a)\cdot \delta(q,a)(q'))$.
These conditions say that there
is a unique start state, and the probability of visiting a state $q'$,
after visiting $\beta q$, is the same as the probability of the strategy 
picking a particular letter multiplied by the probability that  the transducer 
transitions from $q$ to $q'$ on that input letter, summed over all input 
letters.

\subsection{Library of Components}
A {\em library} is a set of probabilistic transducers 
that share the same input and output alphabets. Each transducer in the
library is called a \emph{component type}. Given a finite set of directions
$D$, we say a library $\L$ has width $D$, if each component type in the
library has exactly $|D|$ exit states. Since we can always add dummy
unreachable exit states to any component, we assume, w.l.o.g., that
all libraries have an associated width, usually denoted $D$. In the
context of a particular component type, we often refer to elements
of $D$ as exits, and subsets of $D$ as sets of exits. 

\subsection{Controlflow Composition from Libraries}
We first informally describe the notion of controlflow
composition of components from a library as defined in~\cite{NLV14}.
The components in the
composition take turns interacting with the 
environment, and at each point in time, exactly one component is
active. When the active component reaches an exit state, control is
transferred to some other component. Thus, to define a controlflow
composition, it suffices to name the components used and describe how
control should be transferred between them. We use a deterministic
transducer to define the transfer of control.
Each library component can be used multiple times in a composition, and we
treat these occurrences as distinct \emph{component instances}. 
We emphasize that the composition can contain
potentially arbitrarily many instances of each component type inside it.
Thus, the size of the composition, a priori, is not bounded.
Note that our notion of composition is \emph{static}, where the components
called are determined before run time, rather than \emph{dynamic}, where 
the components called are determined during run time. 

Let $\L$ be a library of width $D$. A \emph{composer} over $\L$ is a
deterministic transducer 
$C = \tuple{D, \L, \mathcal{M}, \M_0, \Delta, \lambda}$. Here $\mathcal{M}$
is an arbitrary finite set of states. There is no bound on the size of
$\mathcal{M}$. Each $\M_i\in \mathcal{M}$ is 
a  component from $\L$ and $\lambda(\M_i) \in \L$ is the type of 
$\M_i$. We use the following notational convention for component
instances and names: the upright letter $\M$ always denotes
component names (i.e., states of a composer) and the italicized letter
$M$ always denotes the corresponding component instances (i.e., elements of
$\L$). Further, for notational convenience we often write $M_i$
directly instead of $\lambda(\M_i)$. Note that while each $\M_i$ is
distinct, the corresponding components $M_i$ need not 
be distinct. 
Each composer defines a unique composition over components from
$\L$. The current state of the composer corresponds to the component
that is in control. The transition function $\Delta$ describes how to transfer
control between components: $\Delta(\M,i) = \M'$ denotes that when the
composition is in the $i$th  final state of component $M$ it moves to
the start state of component $M'$. 
A composer can be viewed as an implicit
representation of a composition. We give an explicit definition of 
composition below.

\begin{definition}[Controlflow Composition]\label{def:composition}
Let $C = \tuple{D,\L,\mathcal{M}, \M_0,\Delta,
\lambda}$ be a composer over library $\L$ of width $D$, where
$\mathcal{M} = \{\M_0,\dotsc,\M_n\}$, $\lambda(\M_i) = \tuple{\Sigma_I, 
\Sigma_O, Q_i, q_0^i, \delta_i, F_i, L_i}$ and $F_i = \{q_x^i : x\in D\}$. 
The composition defined by $C$, denoted $\T_C$, is a probabilistic
transducer 
$\tuple{\Sigma_I,\Sigma_O, Q, q_0, \delta, \emptyset, L }$, where $Q
 =\bigcup_{i=0}^n (Q_i\times\{i\})$, $q_0 = \tuple{q^0_0,0}$, 
$L(\tuple{q,i}) = L_i(q)$, and the transition function $\delta$ is
defined as follows: For $\sigma\in \Sigma_I$, $\tuple{q,i}\in Q$ and  $\tuple{q',j} \in Q$,

\vspace{2pt}
\begin{enumerate}
\item If $q\in Q_i\setminus F_i$, then
\[
\qquad \delta(\tuple{q,i},\sigma)(\tuple{q',j}) =
\begin{cases}
\delta_i(q,\sigma)(q') &  \text{if $i = j$} \\
0 & \text{otherwise}
\end{cases}
\]
\item If $q = q_x^i \in F_i$, where  $\Delta(\M_i,x) = \M_k$, then 
\[
 \qquad \delta(\tuple{q,i},\sigma)(\tuple{q', j}) = 
\begin{cases}
1 & \text{if $j$ = $k$ and $q' = q_0^k$} \\
0 & \text{otherwise}
\end{cases}
\]
\end{enumerate}
\end{definition}

Note that the composition is a probabilistic transducer without exits.
When the composition is in a state $\tuple{q,i}$ corresponding to a non-exit
state $q$ of component $M_i$, it behaves like $M_i$. When the
composition is in a state $\tuple{q_f, i}$ corresponding to an exit state
$q_f$ of component $M_i$, the control is transferred to the start state
of another component as determined by the transition function of the
composer. Thus, at each point in time, only one component is active
and interacting with the environment.

\subsection{Parity objectives and values for probabilistic transducer}
An \emph{index function} for a transducer is a function that assigns
a natural number, called a priority index, to each state of the transducer.
An index function $\alpha$ defines a parity objective $\Phi_{\alpha}$ 
that is the subset of $Q^\omega$
that consists of the set of infinite sequence of states such that 
the minimum priority that is visited infinitely often is even.
Given a probabilistic transducer $\T$ and a parity objective $\Phi$, the 
value of the probabilistic transducer for the objective, denoted as 
$\val(\T,\Phi)$, is $\inf_{f \in \calf} \mu_f(\Phi)$.
In other words, it is the minimal probability with which the parity objective 
is satisfied over all strategies in the transducer.


\subsection{The synthesis questions}
In this work we consider two types of synthesis questions for controlflow
composition. 
In the first problem (namely, synthesis for embedded parity) the parity 
objective is specified directly on the state space of the library components, 
and in the second problem (namely, synthesis from DPW specifications) the 
parity objective is specified by a separate deterministic parity automaton.

\subsubsection{Synthesis for Embedded Parity}\label{sec:embedded-parity}
We first consider an index function that associates to each state of the components
in the library a priority, and a specification defined as a parity condition over 
the sequence of visited states.


\smallskip\noindent{\em Exit control relation.}
Given a library $\L$ of width $D$, an \emph{exit control relation} is a
set $R\subseteq D \times \L$. 
We say that a composer $C = \tuple{D, \L, \mathcal{M}, \M_0, \Delta,
\lambda}$ is \emph{compatible}
with $R$, if the following holds: for all $\M, \M'\in \mathcal{M}$ and
$i\in D$, if $\Delta(\M, i) = \M'$ then $\tuple{i,M'} \in R$. Thus, each
element of $R$ can be viewed as a constraint on how the composer is
allowed to connect components.
An exit control relation is \emph{non-blocking} if for every $i \in D$ there 
exists a component $M \in \L$ such that $\tuple{i,M} \in R$ (i.e., every 
exit has at least one possible component for the next choice).
For technical convenience we only consider non-blocking exit control
relations.
If $R=D \times \L$ (i.e., there is no constraint on the composer to
connect components), then we refer to the relation as \emph{unrestricted}
exit control relation.

\begin{definition}[Embedded parity realizability and synthesis.]
Consider a library $\L$ of width $D$, an exit control relation $R$ for $\L$, 
and an index function $\alpha$ for the components in $\L$ that defines the 
parity objective $\Phi_{\alpha}$. 
The \emph{qualitative} (resp., \emph{quantitative}) 
\emph{embedded parity realizability problem} is to decide whether there exists 
a composer $C$ over $\L$, such that $C$ is compatible with $R$, and 
$\val(\T_C,\Phi_{\alpha})=1$ (resp., $\val(\T_C,\Phi_{\alpha}) \geq \eta$, 
for a given rational threshold $\eta \in (0,1)$).
A witness composer for the qualitative problem is called an almost-sure 
composer, and for the quantitative problem is called an $\eta$-optimal 
composer.
The corresponding \emph{embedded parity synthesis problems} are to find such a 
composer $C$ if it exists.
\end{definition}

\subsubsection{Synthesis for DPW Specifications}
A \emph{deterministic parity automaton (DPW)} is a deterministic transducer 
where the labeling function is an index function that defines a parity objective. 
Given a DPW $A$, every word (infinite sequence of input letters) induces a run 
of the automaton, which is an infinite sequence of states, and the word is
accepted if the run satisfies the parity objective. 
The language $L_A$ of a DPW $A$ is the set of words accepted by $A$.
Let $A$ be a deterministic parity automaton (DPW), 
$M$ be a probabilistic transducer and $\L$ be a library of components. 
We say $A$ is a \emph{monitor} for $M$ (resp. $\L$) if the input alphabet 
of $A$ is the same as the output alphabet of $M$ (resp. $\L$).
Let $A$ be a monitor for $M$ and let $L_A$ be the language accepted by $A$.
The value of $M$ for $A$, denoted as $\val(M,A)$, is 
$\inf_{f \in \calf} \mu_f(\lambda^{-1}(L_A))$.
Note that the compatibility of the composer with an exit control relation 
can be encoded in the DPW (without loss of generality, we do not allow two 
distinct exit states to have the same output).

\begin{definition}[DPW realizability and synthesis.]
Consider  a library $\L$ and a DPW $A$ that is a monitor for $\L$. 
The \emph{qualitative} (resp., \emph{quantitative}) 
\emph{DPW probabilistic realizability problem} is to decide
whether there exists a  composer $C$ over $\L$, such that 
$\val(\T_C,A)=1$ (resp., $\val(\T_C,A) \geq \eta$, 
for a given rational threshold $\eta \in (0,1)$).
A witness composer for the qualitative problem is called an almost-sure 
composer, and for the quantitative problem is called an $\eta$-optimal 
composer.
The corresponding \emph{DPW probabilistic synthesis problems} are to
find such a composer $C$ if it exists.
\end{definition}

\begin{remark}\label{rem:partial}
We remark that the realizability problem for libraries with components 
can be viewed as a 2-player partial-observation stochastic parity game. 
Informally, the game can be described as follows: 
the two players are the composer $C$ and the environment $E$. 
The $C$ player chooses components and the $E$ player chooses sequence of 
inputs in the components chosen by $C$. However, $C$ cannot see the inputs
of $E$ or even the length of the time inside a component. 
At the start $C$ chooses a component $M$ from the library $\L$. 
The turn passes to $E$, who chooses a sequence of inputs, inducing a 
probability distribution over paths in $M$ from its start state to some exit 
$x$ in $D$. 
The turn then passes to $C$, which must choose some component $M'$ in $\L$ 
and pass the turn to $E$ and so on. 
As $C$ cannot see the moves made by $E$ inside $M$, the choice of $C$ 
cannot be based on the run in $M$, but only on the exit induced by the inputs 
selected by $E$ and previous moves made by $C$. So $C$ must choose the same next
component $M'$ for different runs that reach exit $x$ of $M$. In general,
different runs will visit different priorities inside $M$. This is a
two-player stochastic parity game where one of the players (namely the 
composer~$C$) does not have full information. However, there is also a crucial 
difference from traditional partial-observation games, as the composer does 
not even see the number of steps executed inside a component. 
If $C$ has a winning strategy that requires finite memory, then such a 
strategy would yield a suitable finite composer to satisfy the parity 
objective defined by the index function $\alpha$, thus solving the synthesis 
problem.
\end{remark}

\section{The Complexity of Realizability for Embedded Parity}
In this section we will establish the results for the complexity of 
realizability for embedded parity. 
In~\cite{NLV14} the problem was interpreted as a partial-observation game
(see Remark~\ref{rem:partial}). 
While the natural interpretation of the  embedded parity problem is a 
partial-observation game, we show how the problem can be interpreted as a 
perfect-information stochastic game.

\subsection{Perfect-information Stochastic Parity Games}
In this section we present the basic definitions and results
for perfect-information stochastic games.

\noindent{\bf Perfect-information stochastic games.}
A perfect-information stochastic game consists of a tuple 
$G=\tuple{S,S_1,S_2,A_1,A_2,\transg}$, where $S$ is a finite set of states 
partitioned into player-1 states (namely, $S_1$) and player-2 states 
(namely $S_2$), $A_1$ (resp., $A_2$) is the set of actions for player~1 
(resp., player~2), and $\transg: (S_1 \times A_1) \cup (S_2 \times A_2) 
\to \dist(S)$ is a probabilistic transition function that given a player-1 
state and player-1 action, or a player-2 state and a player-2 action gives
a probability distribution over the successor states.
If the transition function is \emph{deterministic} (that is the codomain
of $\transg$ is $S$ instead of $\dist(S)$), then the game is a perfect-information deterministic game.

\smallskip\noindent{\bf Plays and strategies.} 
A \emph{play} is an infinite sequence of state-action pairs 
$\tuple{s_0 a_0 s_1 a_1 \ldots}$ such that for all $j \geq 0$ we have that 
if $s_j \in S_i$ for $i\in \set{1,2}$, then $a_j \in A_i$ and 
$\transg(s_j,a_j)(s_{j+i} )>0$.
A strategy is a recipe for a player to choose actions to extend finite prefixes of plays.
Formally, a strategy $\straa$ for player~1 is a function 
$\straa: S^\ast \cdot S_1 \to \dist(A_1)$ that given a 
finite sequence of visited states gives a probability distribution over the actions 
(to be chosen next).
A \emph{pure} strategy chooses a deterministic action, i.e., is a function
$\straa: S^\ast \cdot S_1 \to A_1$.
A pure memoryless strategy is a pure strategy that does not depend on the 
finite prefix of the play but only on the current state, i.e., is a function 
$\straa: S_1 \to A_1$.
The definitions for player-2 strategies $\strab$ are analogous.
We denote by $\Straa$ (resp., $\Straa^{\PM}$) the set of all (resp., all pure 
memoryless) strategies for player~1, and analogously $\Strab$ 
(resp., $\Strab^{\PM}$ for player~2).
Given strategies $\straa \in \Straa$ and $\strab \in \Strab$, and a starting 
state $s$, there is a unique probability measure over events (i.e., measurable subsets
of $S^{\omega}$), which is denoted as $\Prb_s^{\straa,\strab}(\cdot)$.

\smallskip\noindent{\em Finite-memory strategies.}
A pure player-1 strategy uses \emph{finite-memory} if it can be encoded
by a transducer $\tuple{\mem, m_0, \straa_u, \straa_n}$
where $\mem$ is a finite set (the memory of the strategy), 
$m_0 \in \mem$ is the initial memory value,
$\straa_u: \mem \times S \to \mem$ is the memory-update function, and 
$\straa_n: \mem \to A_1$ is the next-action function. 
Note that a finite-memory strategy is a deterministic transducer with 
input alphabet $S$, output alphabet $A_1$, where $\straa_u$ is the deterministic
transition function, and $\straa_n$ is the output labeling function.
However, for finite-memory strategies, since the input and output is always
the set of states and actions for player~1, for simplicity, 
we will represent them as a tuple  $\tuple{\mem, m_0, \straa_u, \straa_n}$.
The \emph{size} of the strategy is the number $\abs{\mem}$ of memory values.
If the current state is $s$, and the current memory value is $m$,
then the memory is updated to $m' = \straa_u(m,s)$, and 
the strategy chooses the next action $\straa_n(m')$.
Formally, $\tuple{\mem, m_0, \straa_u, \straa_n}$
defines the strategy $\straa$ such that $\straa(\rho)= 
\straa_n(\widehat{\straa}_u(m_0, \rho))$ for all $\rho \in S^+$, where 
$\widehat{\straa}_u$ extends $\straa_u$ to sequences of states  as expected. 


\smallskip\noindent{\bf Parity objectives, almost-sure, and value problem.} 
Given a perfect-information stochastic game, a parity objective is defined by an index
function $\alpha$ on the state space.
Given a strategy $\straa$, the value of the strategy in a state $s$ of the game $G$
with parity objective $\Phi_{\alpha}$, denoted by $\val^G(\straa,\Phi_{\alpha})(s)$, 
is the infimum of the probabilities among all 
player-2 strategies, i.e., $\val^G(\straa,\Phi_{\alpha})(s)= 
\inf_{\strab \in \Strab} \Prb_s^{\straa,\strab}(\Phi_{\alpha})$.
The value of the game is $\val^G(\Phi_{\alpha})(s) 
=\sup_{\straa \in \Straa} \val^G(\straa,\Phi_{\alpha})(s)$.
A strategy $\straa$ is almost-sure winning from $s$ 
if $\val^G(\straa,\Phi_{\alpha})(s)=1$.
The following theorem summarizes the basic results about 
perfect-information games.

\begin{theorem}\label{thm:perfect} 
The following assertions hold~{\rm \cite{CJH04,ChaThesis,CF11,CH08,AM09}}:
\begin{compactenum}
\item \emph{(Complexity).} 
The quantitative decision problem (of whether $\val^G(\Phi_{\alpha}) \geq \eta$, 
given rational $\eta \in (0,1]$) for perfect-information stochastic parity games
lies in NP $\cap$ coNP (also UP $\cap$ coUP).

\item \emph{(Memoryless determinacy).} We have 
\[
\val^G(\Phi_{\alpha})(s) = 
\sup_{\straa \in \Straa^{\PM}} \inf_{\strab \in \Strab} 
\Prb_s^{\straa,\strab}(\Phi_{\alpha})
=
\inf_{\strab \in \Strab^{\PM}} \sup_{\straa \in \Straa} 
\Prb_s^{\straa,\strab}(\Phi_{\alpha}),
\]
i.e., the quantification over the strategies can be restricted 
to $\straa \in \Straa^{PM}$ and $\strab \in \Strab^{PM}$.
\end{compactenum}
\end{theorem}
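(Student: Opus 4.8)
The plan is to derive both assertions from a single structural fact: turn-based perfect-information stochastic parity games are determined, and in such games both players possess \emph{pure memoryless} optimal strategies. I would prove assertion~(2) first, since the complexity bound~(1) then follows by a guess-and-verify argument. As a preliminary step I would establish that the value exists, i.e. $\sup_{\straa \in \Straa} \inf_{\strab \in \Strab} \Prb_s^{\straa,\strab}(\Phi_{\alpha}) = \inf_{\strab \in \Strab} \sup_{\straa \in \Straa} \Prb_s^{\straa,\strab}(\Phi_{\alpha})$. Since the parity objective $\Phi_{\alpha}$ is $\omega$-regular, hence a Borel subset of $S^\omega$, determinacy follows from Martin's theorem on the Blackwell determinacy of stochastic games with Borel payoffs, which justifies writing a single quantity $\val^G(\Phi_{\alpha})(s)$.

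The heart of the argument, and the step I expect to be the main obstacle, is proving that this value is attained by pure memoryless strategies on both sides. My approach is an induction on the number of distinct priorities assigned by $\alpha$, following a Zielonka-style decomposition adapted to the stochastic setting (as in the cited works). In the base cases (reachability, safety, B\"uchi, and coB\"uchi objectives) pure memoryless optimality is classical. For the inductive step I would isolate the states carrying the least priority $d$: if $d$ is even I analyze player~1's attractor to these states and recurse on the subgame obtained after removing it, and symmetrically if $d$ is odd; a fixpoint over the sub-solutions, together with the memoryless determinacy of the subgames, assembles a single pure memoryless strategy optimal from every state. The delicate point, absent in the deterministic case, is that one must track the \emph{value} (a real number) rather than a win/lose partition, and argue that splicing memoryless sub-strategies loses no value on the boundary between subgames. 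Equivalently, one may reduce parity to a mean-payoff objective by weighting priority $p$ with $(-\abs{S})^{p}$ and invoke pure memoryless optimality of turn-based stochastic mean-payoff games; either route yields the two equalities of assertion~(2).

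Finally I would derive the complexity bound of assertion~(1) from memoryless determinacy. For NP membership of deciding $\val^G(\Phi_{\alpha})(s) \geq \eta$, I would guess a pure memoryless strategy $\straa \in \Straa^{\PM}$ for player~1, which is a polynomial-size object; fixing $\straa$ turns the game into a Markov decision process controlled by player~2 with a parity objective, whose minimizing value is computable in polynomial time via maximal end-component analysis and linear programming, and I then check that this value is at least $\eta$. Memoryless determinacy ensures that some guessed $\straa$ witnesses the bound exactly when it holds. The coNP bound is symmetric, guessing a pure memoryless $\strab \in \Strab^{\PM}$ and solving the resulting player-1 MDP. To sharpen NP~$\cap$~coNP to UP~$\cap$~coUP I would follow Jurdzi\'nski's treatment of non-stochastic parity games: the values are rationals of polynomially bounded bit-length, and one can select a \emph{canonical} certificate (an optimal pure memoryless strategy together with its induced MDP values, or a progress-measure-style witness) deterministically, so that the nondeterministic machine has a unique accepting computation on yes-instances, establishing the UP~$\cap$~coUP membership.
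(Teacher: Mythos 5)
The paper does not actually prove this theorem: it is stated as a summary of known results and discharged entirely by citation (to the works on quantitative stochastic parity games and the parity-to-mean-payoff reduction), so there is no in-paper argument to compare yours against line by line. What you have written is a reconstruction of the proofs in that cited literature, and in outline it is the right one: value existence from Martin's Blackwell determinacy, pure memoryless optimality for both players either by induction on priorities with attractor/value-class analysis or by reduction to turn-based stochastic mean-payoff games, and then NP membership by guessing a pure memoryless strategy for player~1 and solving the resulting parity MDP in polynomial time (end-component analysis plus linear programming), with coNP symmetric. This is exactly the route the cited papers take, so your proposal is consistent with, rather than different from, the paper's (absent) argument.

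Two technical points in your sketch deserve correction. First, the weighting $(-\abs{S})^{p}$ for priority $p$ is the one used to reduce \emph{deterministic} parity games to mean-payoff games; in the stochastic setting it is insufficient, because transitions with very small probabilities can make low-priority states contribute negligibly to the long-run average, destroying the intended dominance of smaller priorities. The cited reduction has to choose weights whose magnitudes also grow with the bit-length (denominators) of the transition probabilities, and this is where its main technical work lies. Second, your UP~$\cap$~coUP argument is shaky as stated: optimal pure memoryless strategies are in general not unique, and ``selecting a canonical one deterministically'' is not something a UP machine can verify. The standard fix is to make the certificate the \emph{value vector} itself, which is unique, has polynomially bounded bit-length, and can be verified in polynomial time via its local-consistency (fixpoint) characterization together with the qualitative (almost-sure) conditions that pin it down; uniqueness of the certificate then gives uniqueness of the accepting computation. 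With these two repairs your outline matches the known proofs.
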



\subsection{Complexity Results}\label{sec:complexity-results}
We first present the reduction for upper bounds.

\smallskip\noindent{\bf The upper-bound reduction.}
Consider a library $\L$ of width $D$, a non-blocking exit control relation $R$ for $\L$, 
and an index function $\alpha$ for $\L$ that defines the parity 
objective $\Phi_{\alpha}$.
Let the number of components be $k+1$, and let $M_i=\tuple{\Sigma_I, \Sigma_O,
Q_i, q_0^i, \delta_i, F_i, L_i}$ for $0\leq i \leq k$.
Let us denote by $[k]=\set{0,1,2,\ldots,k}$.
We define a perfect-information stochastic game $G_{\L}=\tuple{S,S_1,S_2,A_1,A_2,\transg_{\L}}$ 
with an index function $\alpha_G$ as follows:
$S=\bigcup_{i=0}^k (Q_i \times \set{i}) \cup \set{\bot}$, $S_1= \bigcup_{i=0}^k (F_i \times \set{i})$, $S_2=S\setminus S_1$,
$A_1=[k]$, and $A_2=\Sigma_I$. 
The state $\bot$ is a losing absorbing state (i.e., a state with self-loop as the only 
outgoing transition and assigned odd priority by the index function $\alpha_G$), 
and the other transitions defined by the function $\transg_{\L}$ are as follows:
(i)~for $s=\tuple{q,i} \in S_2$, and $\sigma \in A_2$ 
\[
\qquad \transg_{\L}(\tuple{q,i},\sigma)(\tuple{q',j}) =
\begin{cases}
\delta_i(q,\sigma)(q') &  \text{if $i = j$} \\
0 & \text{otherwise}
\end{cases}
\]
(ii)~for $s=\tuple{q^i_x,i} \in S_1$ and $j \in [k]$, we have that
if $\tuple{x,M_j} \in R$,  then $\transg_{\L}(\tuple{q^i_x,i},j)(\tuple{q_0^j,j})=1$, 
else $\transg_{\L}(\tuple{q^i_x,i},j)(\bot)=1$.
The intuitive description of the transitions is as follows: 
(1)~Given a player-2 state
that is a non-exit state $q$ in a component $M_i$, and an action for player~2 that is an input
letter, the transition function $\transg_{\L}$ mimics the transition $\delta_i$ of
$M_i$; and 
(2)~given a player-1 state that is an exit state $q^{i}_x$ in component $i$, and 
an action for player~1 that is the choice of a component $j$, if $\tuple{x,M_j}$ is allowed
by $R$, then the next state is the starting state of component $j$, and if the choice
$\tuple{x,M_j}$ is invalid (not allowed by $R$), then the next state is the losing absorbing
state $\bot$.
For all $\tuple{q,i} \in S \setminus \set{\bot}$ we have $\alpha_G(\tuple{q,i})=\alpha(q)$,
and we denote by $\Phi_{\alpha_G}$ the parity objective in $G_{\L}$.
Note the similarity of the state space description in comparison with 
Definition~\ref{def:composition} for controlflow composition.

\smallskip\noindent{\bf Correctness of reduction.} 
There are two steps to establish correctness of the reduction.
The first step is given a composer for $\L$ to construct a finite-memory strategy 
for player~1  in $G_{\L}$.
Intuitively, this is simple as a composer represents a strategy for a 
partial-observation game (Remark~\ref{rem:partial}), whereas in $G_{\L}$ we 
have perfect information.
However, not every strategy in $G_{\L}$ can be converted to a composer 
(i.e., a perfect-information game strategy cannot be converted to a partial-observation
strategy).
But we show that a pure memoryless strategy in $G_{\L}$ can be converted to a 
composer. 
While~\cite{NLV14} treats the problem as a partial-observation game, our insight to convert
pure memoryless strategies of the perfect-information game $G_{\L}$ to composers 
allows us to establish the correctness with respect to $G_{\L}$.
We present both the steps below.

\begin{lemma}
Consider a library $\L$ of width $D$, a non-blocking exit control relation $R$ for $\L$, 
and an index function $\alpha$ for $\L$ that defines the parity 
objective $\Phi_{\alpha}$. 
Let $G_{\L}$ be the corresponding perfect-information stochastic game with 
parity objective $\Phi_{\alpha_G}$.
For all composers $C$, and the corresponding strategy $\straa_C$ in $G_{\L}$ we 
have $\val(\T_C,\Phi_{\alpha})=\val^{G_{\L}}(\straa_C,\Phi_{\alpha_G})(\tuple{q_0^0,0})$.
\end{lemma}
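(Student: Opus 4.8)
The plan is to establish an exact correspondence between the probability measure $\mu_f$ induced by the composition $\T_C$ and the probability measure $\Prb_s^{\straa_C,\strab}$ induced in the game $G_{\L}$, where the player-2 strategy $\strab$ is chosen to match the worst-case input strategy $f$ for $\T_C$. Since both $\val(\T_C,\Phi_\alpha)$ and $\val^{G_\L}(\straa_C,\Phi_{\alpha_G})(\tuple{q_0^0,0})$ are infima over the respective adversarial choices (strategies $f$ for the transducer, player-2 strategies $\strab$ in the game), it suffices to show that these two families of measures coincide under a natural bijection between input-strategies and player-2 strategies.

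First I would make precise the strategy $\straa_C$ induced by a composer $C=\tuple{D,\L,\mathcal{M},\M_0,\Delta,\lambda}$. The key observation from Definition~\ref{def:composition} is that the state space $Q=\bigcup_i(Q_i\times\{i\})$ of $\T_C$ and the state space $S\setminus\{\bot\}$ of $G_\L$ are essentially the same indexed-state set, except that the composition indexes by \emph{instances} $\M_i\in\mathcal{M}$ whereas $G_\L$ indexes by component \emph{types} $i\in[k]$. I would therefore define a projection $\pi$ from runs of $\T_C$ to runs of $G_\L$ that sends $\tuple{q,\M_i}\mapsto\tuple{q,\lambda(\M_i)}$, and use $\pi$ to transport the measure; the player-1 strategy $\straa_C$ is the finite-memory strategy whose memory is the composer state $\M$ and whose next-action function reads off $\Delta(\M,x)$ at an exit $q_x$. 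On player-2 states the game exactly mimics $\delta_i$, so $\transg_\L$ agrees with $\delta$ of Definition~\ref{def:composition} part~(1); on player-1 (exit) states, because $C$ is compatible with $R$, every move $\straa_C$ actually takes satisfies $\tuple{x,M_j}\in R$, so it lands on $\tuple{q_0^j,j}$ and never on the losing absorbing state $\bot$, matching part~(2).

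The core of the argument is then a measure-identity that I would prove by induction on the length of cylinders: for every finite path $\beta$ in $\T_C$ and the corresponding player-2 strategy $\strab$ obtained from an input strategy $f$, the cylinder probabilities $\mu_f(\beta\cdot Q^\omega)$ and $\Prb^{\straa_C,\strab}_{\tuple{q_0^0,0}}(\pi(\beta)\cdot S^\omega)$ agree. The inductive step splits into the two transition cases: on non-exit states the recurrence $\mu_f(\beta qq')=\mu_f(\beta q)\cdot\sum_a f(\beta q)(a)\,\delta_i(q,a)(q')$ is matched term-by-term by the game, where $\strab$ plays the same distribution over $A_2=\Sigma_I$; on exit states the composition's deterministic transfer $\Delta(\M_i,x)=\M_k$ is matched by $\straa_C$'s deterministic action $k$, both assigning probability $1$ to the start state of the target. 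Since the index functions satisfy $\alpha_G(\tuple{q,i})=\alpha(q)$, the projection $\pi$ preserves the parity objective, so $\mu_f(\Phi_\alpha)=\Prb^{\straa_C,\strab}_{\tuple{q_0^0,0}}(\Phi_{\alpha_G})$. Taking infima over the (mutually corresponding) adversarial strategies yields the claimed equality of values.

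\textbf{The main obstacle} I anticipate is the mismatch between instances and types: a single type may appear under many instance-names in $\mathcal{M}$, so $\pi$ is generally many-to-one, and I must verify that the worst-case environment in $G_\L$ (which only sees types) cannot do strictly better or worse than the worst-case input strategy against $\T_C$ (which sees instances). Concretely, I would argue that for the \emph{infimum} both directions hold: any input strategy $f$ for $\T_C$ induces a player-2 strategy $\strab$ in $G_\L$ via $\pi$ giving the same value, and conversely any player-2 strategy restricted to the support of $\straa_C$ can be pulled back to an input strategy, since $\straa_C$ visits only states reachable in $\T_C$ and the transition probabilities depend solely on the type $\lambda(\M_i)$, not on the instance name. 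This pull-back/push-forward correspondence is what makes the bijection on relevant plays measure-preserving, and is the step requiring the most care.
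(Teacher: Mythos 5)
Your proposal is correct and takes essentially the same route as the paper's proof: you construct the same finite-memory strategy $\straa_C$ (memory $=$ composer states, next action $=$ the type of the component selected by $\Delta$), and your projection $\pi$ together with the cylinder-by-cylinder induction is precisely the rigorous form of the paper's (much terser) assertion that there is a one-to-one correspondence between player-2 strategies in $G_{\L}$ and environment strategies in $\T_C$. The instance-versus-type obstacle you flag is real but resolves exactly as you suggest: since the composer is deterministic, the instance sequence along any play compatible with $\straa_C$ is uniquely determined by the type-indexed play, so the correspondence is a measure-preserving bijection on the relevant plays.
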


\begin{proof}
{\em Composer to finite-memory strategies.}
Given a composer $C = \tuple{D,\L,\mathcal{M}, \M_0,\Delta,\lambda}$ 
for the library we define a finite-memory strategy 
$\straa_C=\tuple{\mem,m_0,\straa_u,\straa_n}$ for the perfect-information 
stochastic game $G_{\L}$ as follows:
(a)~$\mem=\mathcal{M}$ and $m_0=\M_0$; and
(b)~$\straa_u(\M_i,s)=\M_i$ for $s\in S_2$, and 
$\straa_u(\M_i,\tuple{q^j_x,j})=\M_\ell$ for $s\in S_1$ 
if $\Delta(\M_i,x)=\M_\ell$, where $\lambda(\M_i)=M_j$; 
and (c)~$\straa_n(\M_i)=j$ where $\lambda(\M_i)=M_j$. 



In other words, the finite-memory strategy has the same state space as the 
composer, and if the current state is a player-2 state, then it does not update
the memory state, and given the current state is a player-1 state it updates 
it memory state according to the transition function of the composer, and the action 
played is according to the labeling function of the transducer.
In other words, the strategy $\straa_C$ mimics the composer, and there is a 
one-to-one correspondence between strategies for player~2 in the 
perfect-information stochastic game, and the strategies of the environment in 
$\T_C$. 
Without loss of generality we consider that the composer must always start with the first component
(i.e., $M_0$) and hence the starting state is $\tuple{q_0^0,0}$.
This gives us the desired result.
\hfill\qed
\end{proof}

For the second step we first consider valid pure memoryless strategies.

\smallskip\noindent{\em Valid pure memoryless strategies in $G_{\L}$.}
A pure memoryless strategy $\straa$ in $G_{\L}$ is \emph{valid} if the following 
condition holds: for all states $\tuple{q_x^{i},i} \in S_1$ if $\straa(\tuple{q_x^{i},i})=j$,
then $\tuple{x,M_j} \in R$, i.e., the choices of the pure memoryless strategies respect
the exit control relation.

\begin{lemma}
Consider a library $\L$ of width $D$, a non-blocking exit control relation $R$ for $\L$, 
and an index function $\alpha$ for $\L$ that defines the parity 
objective $\Phi_{\alpha}$. 
Let $G_{\L}$ be the corresponding perfect-information stochastic game with 
parity objective $\Phi_{\alpha_G}$.
For all valid pure memoryless strategies $\straa$ in $G_{\L}$, and 
the corresponding composer  $C_{\straa}$, we have 
$\val(\T_{C_{\straa}},\Phi_{\alpha})=\val^{G_{\L}}(\straa,\Phi_{\alpha_G})(\tuple{q_0^0,0})$.
\end{lemma}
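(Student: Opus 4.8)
The plan is to build the composer $C_{\straa}$ directly from the valid pure memoryless strategy $\straa$, and then reduce the claim to the previous lemma by showing that the finite-memory strategy that lemma re-derives from $C_{\straa}$ is behaviorally identical to $\straa$.

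First I would construct $C_{\straa} = \tuple{D,\L,\mathcal{M},\M_0,\Delta,\lambda}$ as the ``flat'' composer with exactly one instance per component type: set $\mathcal{M} = \set{\M_0,\dotsc,\M_k}$ with $\lambda(\M_i) = M_i$, take $\M_0$ as the initial name, and let $\Delta(\M_i,x) = \M_j$ where $j = \straa(\tuple{q_x^i,i})$. Because $\straa$ is valid, $\straa(\tuple{q_x^i,i}) = j$ forces $\tuple{x,M_j} \in R$, so $\Delta$ only connects components allowed by $R$ and $C_{\straa}$ is indeed a composer compatible with $R$. Note that the composition $\T_{C_{\straa}}$ then has state space $\bigcup_{i=0}^k (Q_i \times \set{i})$, which is exactly $G_{\L}$ with the absorbing state $\bot$ removed; this structural coincidence is what makes the two value computations match.

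Applying the previous lemma to $C_{\straa}$ already gives $\val(\T_{C_{\straa}},\Phi_{\alpha}) = \val^{G_{\L}}(\straa_{C_{\straa}},\Phi_{\alpha_G})(\tuple{q_0^0,0})$, so it remains only to prove that the induced finite-memory strategy $\straa_{C_{\straa}}$ and the original memoryless strategy $\straa$ have the same value from $\tuple{q_0^0,0}$. For this I would establish, by induction on the length of play prefixes consistent with $\straa_{C_{\straa}}$, the invariant that the memory value carried into any state $\tuple{q,i}$ equals $\M_i$, i.e.\ the memory always records the component currently in control. The base case is immediate since the play starts in $\tuple{q_0^0,0}$ with initial memory $\M_0$. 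The inductive step uses the shape of the transitions of $G_{\L}$ together with the update rules of $\straa_{C_{\straa}}$: a transition out of a non-exit (player-2) state of $M_i$ stays inside $M_i$ and leaves the memory unchanged ($\straa_u(\M_i,\cdot)=\M_i$ on $S_2$), whereas the transition out of an exit state $\tuple{q_x^i,i}$ moves control to $\tuple{q_0^j,j}$ with $j = \straa(\tuple{q_x^i,i})$ and, at the very same step, updates the memory to $\M_j$, so the invariant is preserved.

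Granting the invariant, at every reachable exit state $\tuple{q_x^i,i}$ the finite-memory strategy plays $\straa_n(\M_{\straa(\tuple{q_x^i,i})}) = \straa(\tuple{q_x^i,i})$, which is exactly the choice of the memoryless strategy $\straa$. Hence $\straa_{C_{\straa}}$ and $\straa$ agree on every player-1 state reachable under either of them, so against each fixed player-2 strategy they induce the same probability measure on plays, and in particular the same probability of $\Phi_{\alpha_G}$; taking the infimum over player-2 strategies yields $\val^{G_{\L}}(\straa_{C_{\straa}},\Phi_{\alpha_G})(\tuple{q_0^0,0}) = \val^{G_{\L}}(\straa,\Phi_{\alpha_G})(\tuple{q_0^0,0})$, which closes the argument. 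I expect the only delicate point to be the precise bookkeeping of the memory-update timing in the invariant: one must check that the memory for the next component is installed exactly on the exit transition, so that the re-encoded strategy neither lags nor leads $\straa$, together with the observation that validity of $\straa$ guarantees $\bot$ is never reached, so that $\Phi_{\alpha}$ and $\Phi_{\alpha_G}$ coincide on all plays relevant to the value.
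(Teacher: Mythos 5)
Your proof is correct, and the composer you construct is exactly the one in the paper's proof: one instance per component type, with $\Delta(\M_i,x)=\M_j$ where $j=\straa(\tuple{q_x^i,i})$, and validity of $\straa$ giving compatibility with $R$. Where you genuinely diverge is in the verification. The paper closes the argument directly: it observes that the composer mimics the memoryless strategy and that there is a one-to-one correspondence between player-2 strategies in $G_{\L}$ and environment strategies in $\T_{C_{\straa}}$, so the two values coincide --- a direct transducer-versus-game correspondence, asserted rather than detailed. You instead factor the argument through the preceding lemma: you apply it to $C_{\straa}$ to obtain $\val(\T_{C_{\straa}},\Phi_{\alpha}) = \val^{G_{\L}}(\straa_{C_{\straa}},\Phi_{\alpha_G})(\tuple{q_0^0,0})$, and then prove, via the memory-tracking invariant, that the round-trip strategy $\straa_{C_{\straa}}$ agrees with $\straa$ on every prefix consistent with the strategies, hence has the same value in $G_{\L}$. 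Your factorization buys two things: the measure-theoretic correspondence between composition and game is established only once (inside the first lemma) instead of being repeated, and the residual work is an elementary induction comparing two player-1 strategies within the same game --- which also makes explicit the memory-update timing (the next component's memory is installed exactly on the exit transition) that the paper's one-line ``mimics'' claim glosses over. The paper's route is shorter and keeps the two lemmas self-contained and symmetric. One point to keep precise in your version: the invariant holds only along prefixes consistent with the prescribed player-1 actions (it fails if a deviating action is inserted), and you correctly restrict to reachable prefixes; together with your observation that validity keeps plays away from $\bot$, the value equality follows.
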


\begin{proof}
{\em Valid pure memoryless strategies to composers.}
Given a valid pure memoryless strategy $\straa$ in $G_{\L}$ we define a composer  
$C_{\straa}=\tuple{D,\L,\mathcal{M}, \M_0,\Delta,\lambda}$ as follows:
$\mathcal{M}=[k]$, $\M_0=0$, $\lambda(i)=M_i$, and 
for $0\leq i \leq k$ and $x \in D$ we have that $\Delta(i,x)=j$ where
$\straa(\tuple{q^i_x,i})=j$ for $q^{i}_x \in F_i$.
In other words, for the composer there is a state for every component, and given a 
component and an exit state, the composer plays as the pure memoryless strategy. 
Note that since $\straa$ is valid, the composer obtained from $\straa$ 
is compatible with the relation $R$. 
Note that the composer mimics the pure memoryless strategy, and there is a 
one-to-one correspondence between strategies of player~2 in $G_{\L}$ and 
strategies of the environment in $\T_{C_\straa}$, which establishes the result.
\hfill\qed
\end{proof}

\begin{lemma}\label{lemma:key_1}
Consider a library $\L$ of width $D$, a non-blocking exit control relation $R$ for $\L$, 
and an index function $\alpha$ for $\L$ that defines the parity 
objective $\Phi_{\alpha}$. 
Let $G_{\L}$ be the corresponding perfect-information stochastic game with 
parity objective $\Phi_{\alpha_G}$.
There exists an almost-sure composer iff there exists an almost-sure winning strategy
in $G_{\L}$ from $\tuple{q_0^0,0}$, and there exists an $\eta$-optimal composer iff the value in 
$G_{\L}$ at $\tuple{q_0^0,0}$ is at least $\eta$.
\end{lemma}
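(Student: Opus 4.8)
The statement bundles four implications: the two ``only if'' directions (a composer yields a game strategy of the same value) and the two ``if'' directions (a game strategy yields a composer of the same value). The plan is to get the first pair directly from the two preceding lemmas and to get the second pair from memoryless determinacy (Theorem~\ref{thm:perfect}) together with a \emph{validity} argument.

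\textbf{The composer-to-game directions.} Suppose first that $C$ is an almost-sure composer, so $\val(\T_C,\Phi_{\alpha})=1$. By the first lemma above, the induced strategy satisfies $\val^{G_{\L}}(\straa_C,\Phi_{\alpha_G})(\tuple{q_0^0,0})=1$, i.e.\ $\straa_C$ is almost-sure winning from $\tuple{q_0^0,0}$. For the quantitative case, an $\eta$-optimal composer gives $\val(\T_C,\Phi_{\alpha})\geq\eta$, whence $\val^{G_{\L}}(\straa_C,\Phi_{\alpha_G})(\tuple{q_0^0,0})\geq\eta$, and since the game value is the supremum over player-1 strategies, $\val^{G_{\L}}(\Phi_{\alpha_G})(\tuple{q_0^0,0})\geq\eta$. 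These directions are immediate and require nothing beyond the first lemma.

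\textbf{The game-to-composer directions.} Here the obstacle is that the second lemma converts only \emph{valid pure memoryless} strategies into composers, whereas an arbitrary almost-sure winning or value-attaining strategy in $G_{\L}$ need be neither pure, nor memoryless, nor valid. Pureness and memorylessness are handled by Theorem~\ref{thm:perfect}(2): the value is attained by pure memoryless strategies, so existence of an almost-sure winning strategy (resp.\ game value $\geq\eta$) yields a pure memoryless $\straa$ with $\val^{G_{\L}}(\straa,\Phi_{\alpha_G})(\tuple{q_0^0,0})=1$ (resp.\ $\geq\eta$). The remaining, and main, difficulty is to ensure validity, i.e.\ that $\straa$ never plays a move sending the token to the losing absorbing state $\bot$.

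\textbf{The validity reduction (the hard step).} I would introduce the game $G'_{\L}$ obtained from $G_{\L}$ by deleting, at each player-1 state $\tuple{q_x^i,i}$, every action $j$ with $\tuple{x,M_j}\notin R$; because $R$ is non-blocking, each player-1 state keeps at least one action, and $\bot$ becomes unreachable. I claim $\val^{G'_{\L}}(\Phi_{\alpha_G})(s)=\val^{G_{\L}}(\Phi_{\alpha_G})(s)$ for every $s\neq\bot$. The inequality ``$\leq$'' is immediate since $G'_{\L}$ only shrinks the maximizer's action set. For ``$\geq$'', fix any pure memoryless player-2 strategy $\strab$; against it $G_{\L}$ reduces to a one-player optimization, and the key pathwise observation is that any play in which player~1 ever takes an invalid action is thereafter trapped in $\bot$ (odd priority, self-loop) and fails $\Phi_{\alpha_G}$, contributing $0$ to $\Prb^{\straa,\strab}(\Phi_{\alpha_G})$. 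Hence rerouting every invalid choice to a fixed valid one never decreases the winning probability, so player~1's optimal value against $\strab$ is the same whether or not invalid actions are available. (I would stress that the naive ``greedy'' fix at the level of the value function is \emph{unsound} for parity, since locally value-optimal actions need not be globally optimal; the pathwise ``invalid $\Rightarrow$ immediately and permanently losing'' argument avoids this.) Taking $\inf_{\strab\in\Strab^{\PM}}\sup_{\straa\in\Straa}$ and using memoryless determinacy to identify this quantity with the value, the values of $G_{\L}$ and $G'_{\L}$ coincide at $\tuple{q_0^0,0}$. An optimal pure memoryless strategy of $G'_{\L}$ is exactly a \emph{valid} pure memoryless strategy of $G_{\L}$ attaining value $1$ (resp.\ $\geq\eta$); feeding it to the second lemma produces a composer $C_{\straa}$ compatible with $R$ with $\val(\T_{C_{\straa}},\Phi_{\alpha})=1$ (resp.\ $\geq\eta$), i.e.\ an almost-sure (resp.\ $\eta$-optimal) composer. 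This closes both ``if'' directions and completes the equivalence.
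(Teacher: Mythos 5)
Your proposal is correct and follows essentially the same route as the paper: the two preceding conversion lemmas handle the composer-to-strategy and strategy-to-composer translations, memoryless determinacy (Theorem~\ref{thm:perfect}) supplies pure memoryless strategies, and validity is obtained from the observation that invalid actions lead to the losing absorbing state $\bot$. The only difference is one of detail: the paper compresses your $G'_{\L}$ construction and pathwise rerouting argument into a one-sentence remark that one may restrict to valid pure memoryless strategies, whereas you make that step explicit (and correctly so, via the stopping-time-style argument rather than a greedy local fix).
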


\begin{proof}
{\em Sufficiency of valid pure memoryless strategies.}
Note that in the game $G_{\L}$ we can restrict to only valid pure memoryless strategies 
because for a state $s \in S_1$ if an action is chosen that is not allowed by a 
valid strategy, then it leads to the losing absorbing state $\bot$. 
Thus in $G_{\L}$ if there is an almost-sure winning strategy (resp., a strategy 
to ensure that the value is at least $\eta$), then there is a valid pure memoryless
strategy to ensure the same (Theorem~\ref{thm:perfect}).
\hfill\qed
\end{proof}

Lemma~\ref{lemma:key_1} along with Theorem~\ref{thm:perfect}
imply that the qualitative and quantitative problems for the 
realizability for embedded parity lie in NP $\cap$ coNP (also UP $\cap$ coUP). 
We now present two related results.
We first show that with unrestricted exit control relation the problem
can be solved in polynomial time, and then show that in general (i.e., with exit 
control relation) the problem is at least as hard as solving perfect-information
deterministic parity games (for which no polynomial-time algorithm is known).

\smallskip\noindent{\bf The unrestricted exit control relation problem.}
For the unrestricted exit control relation problem, we modify the construction 
of the game $G_{\L}$ as follows: we add another state $\top$ that belongs to player~1,
and for every state $s \in S_1$ and every action $a \in A_1$ the next state is $\top$,
and from $\top$ player~1 can choose any component (i.e., state $\tuple{q_0^j,j}$ in $G_{\L}$).
We refer to the modified game as $\ov{G}_{\L}$. 
If the exit control relation is unrestricted then the result corresponding to 
Lemma~\ref{lemma:key_1} holds for $\ov{G}_{\L}$. 
However, in $\ov{G}_{\L}$ the number of memoryless player-1 strategies is only $k+1$
(one each for the choice of each component at $\top$), and once a 
memoryless strategy for player~1 is fixed we obtain an MDP which can be solved 
in polynomial time (and the qualitative problem in strongly polynomial
time by discrete graph theoretic algorithms)~\cite{CY95,CJH03,CH11,CH14}.
Hence it follows that both the qualitative and quantitative realizability and 
synthesis problems for embedded parity with unrestricted exit control relation 
can be solved in polynomial time.

\smallskip\noindent{\bf The hardness reduction.}
We now present a reduction form perfect-information deterministic parity games 
to the realizability problem for embedded parity. 
For simplicity we consider \emph{alternating} games where the players make 
move in alternate turns, i.e., we consider a perfect-information game $G=\tuple{S,S_1,S_2,A_1,A_2,\transg}$
where $\transg$ is deterministic and is decomposed into two functions 
$\transg_1: S_1 \times A_1 \to S_2$ and $\transg_2: S_2 \times A_2 \to S_1$
(player-1 move leads to player-2 state and vice versa).
A perfect-information deterministic game with an index function $\alpha$ can be converted
to an equivalent alternating game with a linear blow-up by adding dummy states. 
Given an alternating perfect-information game, let 
$S_2=\set{s_0^2,s_1^2,s_2^2,\ldots,s_k^2}$ and 
$S_1=\set{s_1^1,s_2^1,\ldots,s_d^1}$. 
We construct $\L$ of width $S_1$, an exit control relation $R$, and an index function 
$\alpha'$ as follows: 
(a)~there are $k+1$ components $M_0, M_1, \ldots, M_k$ 
one for each state in $S_2$; 
(b)~each $M_i=\tuple{ \Sigma_I, \Sigma_O, Q_i, q_0^{i}, \delta_i, F_i, L_i}$ is defined as follows:
(i)~$\Sigma_I=A_2$, (ii)~$\Sigma_O$ and $L_i$ are not relevant for the reduction, 
(iii)~$Q_i=\set{q_0^{i}, q_1^{i},\ldots, q_d^{i}}$ with $F_i=Q_i \setminus \set{q_0^{i}}$, and 
(iv)~$\delta_i(q_0^{i},\sigma)=q_j^{i}$ where $s_j^1=\transg_2(s_i^2,\sigma)$ for all $\sigma \in \Sigma_I$.
The exit control relation $R$ is as follows: $R=\set{\tuple{i,M_j} \mid \exists a_1 \in A_1. \ \transg_1(s_i^1,a_1) =s_j^2}$.
Intuitively, there exists a component for each state in $S_2$, and the exit states for each component
corresponds to states of player~1.
In the start state $q_0^{i}$ for component $M_i$ the transition function $\delta_i$ represents the
transition function $\transg_2$ of the perfect-information game, i.e., given an input letter $\sigma$
which is an action for player~2, if the transition given $\sigma$ is from $s^2_i$ to $s^1_j$, then 
in $M_i$ there is a corresponding transition to $q^{i}_j$.
The exit control relation represents the transitions for player~1. 
Since the exit states in each component is reached in one step, the composer strategies for the
library represents perfect-information strategies.
The index function $\alpha'$ is as follows: $\alpha'(q_0^{i})=\alpha(s_i^2)$ and 
$\alpha'(q_j^{i})=\alpha(s_j^1)$ for $j \geq 1$.
There exists an almost-sure winning strategy in $G$ from $s_0^2$ iff there exists an almost-sure
composer for $\L$ with $R$.
Also note that for the reduction every component transducer is deterministic.

\begin{theorem}[Complexity of embedded parity realizability]
Consider a library $\L$ of width $D$, a non-blocking exit control relation $R$ for $\L$, 
and an index function $\alpha$ for $\L$ that defines the parity 
objective $\Phi_{\alpha}$. The following assertions hold:
\begin{compactenum}
\item The realizability problem belongs to NP $\cap$ coNP (also UP $\cap$ coUP),
and is at least as hard as the (almost-sure) decision problem for 
perfect-information deterministic parity games.
\item If $R$ is an unrestricted exit control relation, then the realizability and 
synthesis problems can be solved in polynomial time
\end{compactenum}
\end{theorem}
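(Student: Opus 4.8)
The plan is to package the perfect-information game $G_{\L}$ (and, for the unrestricted case, its variant $\ov{G}_{\L}$) together with the translation lemmas already proved, so that each bound follows by invoking a known result about perfect-information stochastic parity games. For the \emph{upper bound} I would first note that $G_{\L}$ and its index function $\alpha_G$ are built from $\L$, $R$, and $\alpha$ in polynomial time: the state space is $\bigcup_{i}(Q_i\times\set{i})\cup\set{\bot}$, the transition function copies each $\delta_i$ on non-exit (player-2) states and encodes $R$ on exit (player-1) states, and $\alpha_G$ inherits $\alpha$. By Lemma~\ref{lemma:key_1}, the qualitative (resp.\ quantitative) realizability problem is a yes-instance iff player~1 has an almost-sure winning strategy (resp.\ can secure value at least $\eta$) from $\tuple{q_0^0,0}$ in $G_{\L}$. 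Theorem~\ref{thm:perfect}(1) places both the almost-sure and the quantitative decision problems for perfect-information stochastic parity games in NP $\cap$ coNP (and in UP $\cap$ coUP); since the reduction is polynomial, the realizability problem inherits these bounds.

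For the \emph{hardness} claim I would invoke the reduction from alternating perfect-information deterministic parity games described just above the theorem. It turns a game $G$ with $S_2=\set{s_0^2,\dots,s_k^2}$ and $S_1=\set{s_1^1,\dots,s_d^1}$ into a library $\L$ of width $S_1$, one deterministic component $M_i$ per player-2 state whose start-state transitions realize $\transg_2$, an exit control relation $R$ realizing $\transg_1$, and an index function $\alpha'$ copying the priorities. The correctness statement to verify is that an almost-sure winning strategy for $G$ from $s_0^2$ exists iff an almost-sure composer for $\L$ under $R$ exists; this holds because every component reaches an exit in a single step, so a composer can condition its choice only on the reached exit---exactly the information available to a player-1 strategy in the alternating game. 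As the construction is polynomial and every component is deterministic, the realizability problem is at least as hard as the almost-sure decision problem for perfect-information deterministic parity games, completing part~1.

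For part~2 I would switch to $\ov{G}_{\L}$, which sends every exit state (irrespective of the action) to a fresh player-1 state $\top$, from which player~1 may move to the start state of any component; $\top$ is assigned a priority that cannot be the infinitely-often minimum, so it does not perturb the parity condition. When $R$ is unrestricted, the analogue of Lemma~\ref{lemma:key_1} holds for $\ov{G}_{\L}$: by Theorem~\ref{thm:perfect}(2) the value of $\ov{G}_{\L}$ is attained by a pure memoryless player-1 strategy, which---since $\top$ is a single state---amounts to one global choice of successor component and is therefore realizable by a trivial single-choice composer, while conversely any composer induces a player-1 strategy of no greater value; hence the best composer value equals $\val^{\ov{G}_{\L}}(\Phi_{\alpha_G})(\tuple{q_0^0,0})$. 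The payoff of routing through $\top$ is that there are now only $k+1$ pure memoryless player-1 strategies. Fixing any one of them collapses $\ov{G}_{\L}$ to an MDP controlled by player~2, whose optimal value---and, in the qualitative case, whether it equals~$1$---is computable in polynomial time by standard MDP parity algorithms. Iterating over the $k+1$ strategies, taking the best value, and reading off the corresponding single-choice composer solves both the qualitative and quantitative realizability and synthesis problems in polynomial time.

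Since the substantive work is carried by the translation lemmas and the game constructions, I do not expect a deep obstacle in the theorem itself. The most delicate bookkeeping is verifying that the two gadgets behave as intended---that the absorbing state $\bot$ penalizes exactly the $R$-forbidden transitions, and that $\top$ collapses the memoryless strategy space to $k+1$ options \emph{without} altering the value in the unrestricted case---together with checking that the priority of $\top$ is chosen so as to leave every parity outcome unchanged.
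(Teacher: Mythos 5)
Your proposal is correct and follows essentially the same route as the paper: the polynomial-time construction of $G_{\L}$ combined with Lemma~\ref{lemma:key_1} and Theorem~\ref{thm:perfect} for the UP $\cap$ coUP upper bound, the reduction from alternating perfect-information deterministic parity games (one deterministic component per player-2 state, exits reached in one step, exit control relation encoding player-1 moves) for hardness, and the modified game $\ov{G}_{\L}$ with the hub state $\top$ collapsing the memoryless strategy space to $k+1$ choices, each inducing a polynomial-time-solvable MDP, for the unrestricted case. Your added bookkeeping (choosing $\top$'s priority so it never becomes the infinitely-often minimum, and the two-way correspondence between single-choice composers and memoryless strategies at $\top$) is consistent with what the paper leaves implicit.
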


\section{The Complexity of Realizability for DPW Specifications}
In this section we present three results. 
First, we present a new result for partial-observation stochastic 
parity games. 
Second, we show that the qualitative realizability problem for 
DPW specifications can be reduced to our solution for partial-observation
stochastic games yielding an EXPTIME-complete result for the problem.
Finally, we show that the quantitative realizability problem for 
DPW specifications is undecidable.  

\subsection{Partial-observation Stochastic Parity Games\label{sec:posp-games}} In this section we  
consider partial-observation games with various restrictions on strategies and present 
a new result for a class of strategies that correspond to the analysis of the 
qualitative realizability problem.

\smallskip\noindent{\bf Partial-observation stochastic games.}
In a stochastic game with partial observation, some states are not distinguishable
for player~$1$. We say that they have the same observation for player~$1$.
Formally, a partial-observation stochastic game consists of a stochastic game 
$G = \tuple{S,S_1,S_2,A_1,A_2,\transg}$, a finite set $\Obs$ of observations,
and a mapping $\obs: S \to \Obs$ that assigns to each state $s$ of the game
an observation $\obs(s)$ for player~$1$.

\smallskip\noindent{\bf Observational equivalence and strategies.}
The observation mapping induces indistinguishability of play prefixes for 
player~$1$, and therefore we need to consider only the player-$1$ strategies 
that play in the same way after two indistinguishable play prefixes. 
We consider three different classes of strategies depending on the 
indistinguishability of play prefixes for player~1 and they are as follows:
(i)~the play prefixes have the same observation sequence; 
(ii)~the play prefixes have the same observation sequence, except that the 
last observation may be repeated arbitrarily many times; and
(iii)~the play prefixes have the same sequence of distinct observations, 
that is they have the same observation sequence up to repetition (stuttering).
We now formally define the classes of strategies.

\smallskip\noindent{\bf Classes of strategies.}
The \emph{observation sequence} of a sequence $\rho=s_0 s_1 \dots s_n$
is the sequence $\obs(\rho) = \obs(s_0) \dots \obs(s_n)$ of state observations;
the \emph{collapsed stuttering} of $\rho$ is the sequence 
$\collapse(\rho) = o_0 o_1 o_2 \ldots$ of distinct observations
defined inductively as follows:
$o_0 = \obs(s_0)$ and for all $i \geq 1$ we have $o_i = \obs(s_i)$ 
if $\obs(s_i) \neq \obs(s_{i-1})$, and $o_i = \epsilon$ otherwise
(where $\epsilon$ is the empty sequence). 
We consider three types of strategies. 
A strategy $\straa$ for player~$1$~is 

\begin{itemize}

\item \emph{observation-based} if
for all sequences $\rho, \rho' \in S^+$ such that $\last(\rho) \in S_1$ and 
$\last(\rho') \in S_1$, if $\obs(\rho) = \obs(\rho')$ then $\straa(\rho) = \straa(\rho')$;

\item \emph{observation-based stutter-invariant} if it is observation-based 
and for sequences $\rho \in S^+$, 
for all states $s \in S$, if $\obs(s) = \obs(\last(\rho))$, then $\straa(\rho) = \straa(\rho s)$;


\item \emph{observation-based collapsed-stutter-invariant} 
if for all sequences $\rho, \rho' \in S^+$ such that $\last(\rho) \in S_1$ and 
$\last(\rho') \in S_1$, if $\collapse(\rho) = \collapse(\rho')$,
then $\straa(\rho) = \straa(\rho')$.

\end{itemize}

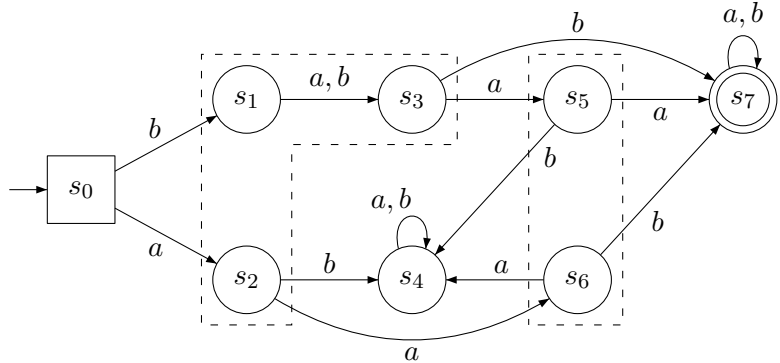
\begin{figure}[!tb]
  \begin{center}
    \hrule
    \begin{picture}(100,55)(0,0)

\gasset{Nw=9,Nh=9,Nmr=4.5,rdist=1, loopdiam=6}   



\node[Nmarks=i, Nmr=0](s0)(5,25){$s_0$}
\node[Nmarks=n](s1)(27,37){$s_1$}
\node[Nmarks=n](s2)(27,13){$s_2$}
\node[Nmarks=n](s3)(49,37){$s_3$}
\node[Nmarks=n](s4)(49,13){$s_4$}
\node[Nmarks=n](s5)(71,37){$s_5$}
\node[Nmarks=n](s6)(71,13){$s_6$}
\node[Nmarks=r](s7)(93,37){$s_7$}

\drawedge[ELpos=50, ELside=r, curvedepth=0](s0,s2){$a$}  
\drawedge[ELpos=50, ELside=l, curvedepth=0](s0,s1){$b$}  

\drawedge[ELpos=50, ELside=l, curvedepth=0](s2,s4){$b$}  
\drawedge[ELpos=50, ELside=r, curvedepth=-8](s2,s6){$a$}  
\drawedge[ELpos=50, ELside=l, curvedepth=0](s1,s3){$a,b$}  
\drawedge[ELpos=50, ELside=l, curvedepth=0](s3,s5){$a$}  
\drawedge[ELpos=50, ELside=l, curvedepth=8](s3,s7){$b$}  
\drawedge[ELpos=46, ELside=r, curvedepth=0](s6,s4){$a$}  
\drawedge[ELpos=40, ELside=r, curvedepth=0](s6,s7){$b$}  
\drawedge[ELpos=25, ELside=l, curvedepth=0](s5,s4){$b$}  
\drawedge[ELpos=50, ELside=r, curvedepth=0](s5,s7){$a$}  

\drawloop[ELpos=40, ELside=l, ELdist=0, loopangle=90, loopdiam=4](s4){$a,b$}
\drawloop[ELside=l, ELdist= 1, loopangle=90, loopdiam=4](s7){$a,b$}

\drawline[AHnb=0, dash={1 1.5}{1.3}](21,7)(21,43)(55,43)(55,31)(33,31)(33,7)(21,7)
\drawline[AHnb=0, dash={1 1.5}{1.3}](64.5,7)(64.5,43)(77,43)(77,7)(64.5,7)




\end{picture}
    \hrule
    \caption{Reachability game where there is no collapsed-stuttering invariant winning strategy,
but there is a observation-based stutter-invariant winning strategy. States $\{s_1,s_2,s_3\}$ are
indistinguishable for player~$1$, as well as $\{s_5,s_6\}$. The objective for player~$1$
is to reach $s_7$.\label{fig:stuttering}}
  \end{center}
\end{figure}


The key difference between observation-based stutter-invariant and observation-based
collapsed-stutter-invariant strategies is as follows: they both must play the 
same action while an observation is repeated, however, collapsed-stutter-invariant 
strategies cannot observe the length of the repetitions of an observation,
whereas a stutter-invariant strategy can. 
In the following example we illustrate this difference.

\smallskip\noindent{\bf Example.}
\figurename~\ref{fig:stuttering} shows a (non-stochastic) reachability game
where the objective of player~$1$ is to reach $s_7$. The states $s_1,s_2,s_3$
have the same observation (for player~$1$), and the states $s_5,s_6$ as well.
The game starts in a \mbox{player-$2$} state $s_0$ with successors $s_1$ and $s_2$.
Thus after one step, player~$1$ does not know whether the game is in $s_1$
or in $s_2$. 
If the game is in $s_2$, then the action $b$ leads to $s_4$ from where
the target state $s_7$ is not reachable. Hence playing $b$ is not a good choice.
Playing $a$ gives either a new observation ($s_6$ is reached), or
the same observation as in the previous step ($s_3$ is reached). 
At this point, a simple observation-based strategy can play $b$ and reach
$s_7$ for sure. However, an observation-based stutter-invariant (or collapsed-stutter-invariant)
strategy must keep playing $a$ in $s_3$ (since the observation did not change)
and reaches $s_5$ with same observation as in $s_6$. Now, an observation-based 
\emph{stutter-invariant} strategy wins by playing $a$ in $s_5$ and $b$ in $s_6$.
Indeed with observation-based stutter-invariant strategies, player~$1$
can distinguish whether the game is in $s_5$ or in $s_6$ (simply looking at the 
length of the play prefix in this case). However, with an observation-based 
\emph{collapsed-stutter-invariant} strategy, player~$1$ cannot win because
the play prefixes $\rho_1=s_0 \dots s_5$ and $\rho_2=s_0 \dots s_6$ have the same 
collapsed stuttering sequence of observations, thus forcing player~$1$ to choose
the same action $a$ or $b$, and action $a$ is losing for $\rho_2$ and action $b$ for 
$\rho_1$. 
Hence in this game there is no observation-based collapsed-stutter-invariant winning strategy,
but there is an observation-based stutter-invariant winning strategy.
%
\medskip

The previous example shows that collapsed-stutter-invariant strategies are different
from stutter-invariant, as well as from standard observation-based strategies. 
Our goal is to decide the existence of finite-memory almost-sure winning strategies
in partial-observation stochastic parity games. 
This problem has been studied for observation-based strategies and optimal complexity
result (EXPTIME-completeness) has been established in~\cite{CDNV14}.
We now present a polynomial-time reduction for deciding the existence of finite-memory 
almost-sure winning collapsed-stutter-invariant strategies to observation-based strategies.

\smallskip\noindent{\bf Reduction of collapsed-stutter-invariant problem to observation-based problem.}
There are two main ideas of the reduction. 
(1)~The first is that whenever player~$1$ plays an action $a$, the
action $a$ is stored in the state space as long as the observation of the state
remains the same. This allows to check that player~$1$ plays always the same 
action along a sequence of identical observations. This only captures the stutter-invariant
restriction, but not the collapsed-stutter-invariant restriction.
(2)~Second, whenever a transition is executed, player~$2$ is allowed to loop 
arbitrarily many times through the new state. 
This ensures that player~$1$ cannot rely on the number of times he sees an observation,
thus that player~$1$ is collapsed-stutter-invariant. However, it should be 
forbidden for player~$2$ to loop forever in a state, which can be ensured by
assigning priority $0$ to the loop (hence player~$1$ would win the parity 
objective if the loop is taken forever by player~$2$).
We now formally present the reduction.

\begin{figure}[!tb]
  \begin{center}

\begin{picture}(0,0)(0,0)
\gasset{Nw=9,Nh=9,Nmr=4.5,rdist=1, loopdiam=6}   
\node[Nframe=n, Nmarks=n](loop)(2.7,-66.1){}   
\drawloop[ELpos=35, ELside=r, loopangle=90, loopdiam=4, loopCW=n, AHlength=0, linewidth=0.09](loop){{\scriptsize $\sharp$}}
\end{picture}

\begin{tabular}{|l|l|l|}
\cline{2-3}
  \multicolumn{1}{c|}{} & \multicolumn{2}{c|}{ {\Large\strut} $\ \ s \xrightarrow{\ a\ } s'$  \Large\strut} \\   
\cline{2-3}
  \multicolumn{1}{c|}{} & \multicolumn{1}{c|}{$\obs(s) = \obs(s')$\Large\strut} & \multicolumn{1}{c|}{$\obs(s) \neq \obs(s')$\Large\strut} \\
  \multicolumn{1}{c|}{} & \multicolumn{1}{c|}{(storing last action of player~$1$)} & \multicolumn{1}{c|}{(forgetting last action of player~$1$)} \\
\hline
                        & $\tuple{s,x} \xrightarrow{\ a\ } \bot$ \quad for $x \neq a$ and $x \neq 0$ \Large\strut 
                        & $\tuple{s,x} \xrightarrow{\ a\ } \bot$ \quad for $x \neq a$ and $x \neq 0$ \Large\strut \\
  $s \in S_1$           & $\tuple{s,a} \xrightarrow{\ a\ } \tuple{s',\ov{a}}$  & $\tuple{s,a} \xrightarrow{\ a\ } \tuple{s',\ov{0}}$ \Large\strut \\
                        & $\tuple{s,0} \xrightarrow{\ a\ } \tuple{s',\ov{a}}$  & $\tuple{s,0} \xrightarrow{\ a\ } \tuple{s',\ov{0}}$ \Large\strut \\
\arrayrulecolor{mygray}\cline{2-3}\arrayrulecolor{black}
                        & \multicolumn{2}{l|}{in $\tuple{s,x}$, if $x \in A_1$, then player~$1$ should play the stored action $x$;\Large\strut} \\
                        & \multicolumn{2}{l|}{\phantom{in $\tuple{s,x}$,} if $x=0$, no action is stored and player~$1$ can choose any action.}  \\
\hline
  $s \in S_2$           & $\tuple{s,x} \xrightarrow{\ a\ } \tuple{s',\ov{x}}$  & $\tuple{s,x} \xrightarrow{\ a\ } \tuple{s',\ov{0}}$ \Large\strut \\
\arrayrulecolor{mygray}\cline{1-3}\arrayrulecolor{black}
                        & \multicolumn{2}{c|}{\Large\strut} \\
  $s \in S_1 \cup S_2$  & \multicolumn{2}{c|}{$\tuple{s,\ov{x}} \xrightarrow{\ a\ } \tuple{s,x}$ \Large\strut} \\
\arrayrulecolor{mygray}\cline{1-3}\arrayrulecolor{black}
                        & \multicolumn{2}{l|}{player~$2$ can play all actions available in the original game, and\Large\strut} \\
                        & \multicolumn{2}{l|}{repeat arbitrarily many times the current observation.} \\
\hline
\end{tabular}
    \caption{Game transformation for the reduction of collapsed-stutter-invariant 
    problem to observation-based problem.\label{fig:reduction-collapsed}}
  \end{center}
\end{figure}

\smallskip\noindent{\bf The formal reduction.}
The reduction is illustrated in \figurename~\ref{fig:reduction-collapsed} and 
formally presented below. 
Given a partial-observation stochastic game $G = \tuple{S,S_1,S_2,A_1,A_2,\transg}$
with observation mapping $\obs: S \to \Obs$, we construct a game 
$G' = \tuple{S',S'_1,S'_2,A_1,A'_2,\transgp}$ as follows:

\begin{itemize}
\item $S' = S \times (A_1 \cup \ov{A}_1 \cup \{0,\ov{0}\}) \cup \{\bot\}$ where 
$\ov{A}_1 = \{\ov{a} \mid a \in A_1\}$, assuming that $0 \not\in A_1$.  
The states $\tuple{s,0}$ are a copy of the state space of the original game, and in the states
$\tuple{s,a}$ with $s \in S_1$ and $a \in A_1$, player~$1$ is required to play action $a$; 
in the states $\tuple{s,\ov{0}}$ and $\tuple{s,\ov{a}}$, player~$2$ can stay for arbitrarily many steps.
The state $\bot$ is absorbing and losing for player~$1$.

\item $S'_1 = S_1 \times (A_1 \cup \{0\}) \cup \{\bot\}$.

\item $S'_2 = S' \setminus S'_1 = (S_2 \times (A_1 \cup \{0\})) \cup (S \times (\ov{A}_1 \cup \{\ov{0}\}))$.

\item $A'_2 = A_2 \cup \{\sharp\}$, assuming $\sharp \not\in A_2$. 

\item The probabilistic transition function $\transgp$ is defined as follows:
for all player-$1$ states $\tuple{s,x} \in S'_1$ and actions $a \in A_1$:
	\begin{itemize}
	\item if $x \in A_1 \setminus \{a\}$, then let $\transgp(\tuple{s,x},a))(\bot) = 1$, 
	that is player~$1$ loses the game if he does not play the stored action;
	\item if $x = a$ or $x=0$, then for all $s' \in S'$ let \\
		$\transgp(\tuple{s,x},a))(\tuple{s',\ov{a}}) = \transg(s,a)(s')$ if $\obs(s') = \obs(s)$,
and let \\
		$\transgp(\tuple{s,x},a))(\tuple{s',\ov{0}}) = \transg(s,a)(s')$ if $\obs(s') \neq \obs(s)$;
thus we store the action $a$ as long as the state observation does not change;

	\item All other probabilities $\transgp(\tuple{s,x},a))(\cdot)$ are set to $0$, for example 
	$\transgp(\tuple{s,0},a))(\tuple{s',y}) = 0$ for all $y \neq \ov{a}$;

	\end{itemize}

\item[] and for all player-$2$ states $\tuple{s,x} \in S'_2$, and actions $a \in A_2$:
	\begin{itemize}
	\item if $x \in A_1 \cup \{0\}$, then for all $s' \in S'$ let \\	
	$\transgp(\tuple{s,x},a))(\tuple{s',\ov{x}}) = \transg(s,a)(s')$ if $\obs(s') = \obs(s)$, and let\\ 
	$\transgp(\tuple{s,x},a))(\tuple{s',\ov{0}}) = \transg(s,a)(s')$ if $\obs(s') \neq \obs(s)$; thus 
	all actions are available to player~$2$ as in the original game, and the stored action $x$ 
	of player~$1$ is maintained if the state observation does not change; 

	\item if $x = \ov{b}$ for some $b \in A_1 \cup \{0\}$, then let \\
	$\transgp(\tuple{s,\ov{b}},\sharp))(\tuple{s,\ov{b}}) = 1$, and \\
	$\transgp(\tuple{s,\ov{b}},a))(\tuple{s,b}) = 1$ if $a \neq \sharp$; thus player~$2$
	can decide to stay arbitrarily long in $\tuple{s,\ov{b}}$ before going back
	to $\tuple{s,b}$;

	\item All other probabilities $\transgp(\tuple{s,x},a))(\cdot)$ and $\transgp(\tuple{s,x},\sharp))(\cdot)$ are set to $0$.
	\end{itemize}


\end{itemize}

The observation mapping $\obs'$ is defined according to the first component 
of the state: $\obs'(\tuple{s,x}) = \obs(s)$.
Given an index function $\alpha$ for $G$, define the index function $\alpha'$ for 
$G'$ as follows: $\alpha'(\tuple{s,x}) = \alpha(s)$ and $\alpha'(\tuple{s,\ov{x}}) = 0$
for all $s \in S$ and $x \in A_1 \cup \{0\}$, and $\alpha'(\bot) = 1$.
Hence, the state $\bot$ is losing for player~$1$, and the player-$2$ states $\tuple{s,\ov{x}}$ are
winning for player~$1$ if player~$2$ stays there forever.

\begin{lemma}\label{lem:reduction-collapsed-stutter-invariant}
Given a partial-observation stochastic game $G$ with observation mapping $\obs$
and parity objective~$\Phi_{\alpha}$ defined by the index function $\alpha$, a
game $G'$ with observation mapping $\obs'$ and parity objective $\Phi_{\alpha'}$ defined by the index function $\alpha'$
can be constructed in polynomial time such that the following statements are equivalent:
\begin{itemize}
\item there exists a finite-memory almost-sure winning observation-based collapsed-stutter-invariant strategy $\straa$ for player~$1$ in $G$ from $s_0$ for the parity objective $\Phi_{\alpha}$;
\item there exists a finite-memory almost-sure winning observation-based strategy $\straa'$ for player~$1$ in $G'$ from $\tuple{s_0,\ov{0}}$ for the parity objective $\Phi_{\alpha'}$.
\end{itemize}
\end{lemma}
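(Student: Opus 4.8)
The plan is to establish the equivalence by constructing the two strategies from each other in a way that preserves the almost-sure winning property. The core intuition is that the game $G'$ simulates $G$ while structurally enforcing the two constraints that define a collapsed-stutter-invariant strategy: the stored action component $x \in A_1 \cup \{0\}$ forces player~$1$ to repeat the same action as long as the observation is unchanged, and the intermediate $\tuple{s,\ov{x}}$ states (where player~$2$ may loop via $\sharp$) destroy any information about the \emph{number} of steps spent in a block of identical observations.

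For the forward direction (from $G$ to $G'$), I would take a finite-memory collapsed-stutter-invariant strategy $\straa$ in $G$ and build an observation-based strategy $\straa'$ in $G'$. Since $\straa$ depends only on $\collapse(\rho)$, and the first components of the states along any play in $G'$ track a play in $G$ whose collapsed-stuttering sequence is recoverable from the observation sequence $\obs'$ seen in $G'$, the strategy $\straa'$ can simply replay the decision of $\straa$ on the corresponding collapsed history. I would verify that $\straa'$ is well-defined as an observation-based strategy (it never needs to distinguish states sharing an observation, since $\straa$ does not), that it never triggers a transition to $\bot$ (because it always plays the stored action when one is stored), and that it has finite memory (inheriting the memory of $\straa$). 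The key point is that the $\sharp$-loops introduced by player~$2$ at the $\tuple{s,\ov{x}}$ states carry priority~$0$, so any infinite $\sharp$-loop is winning for player~$1$; hence player~$2$ gains nothing by looping forever, and on the runs where player~$2$ eventually leaves, the induced play in $G$ has the same parity outcome, preserving almost-sure winning.

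For the backward direction (from $G'$ to $G$), I would take a finite-memory observation-based winning strategy $\straa'$ in $G'$ and extract a collapsed-stutter-invariant strategy $\straa$ in $G$. The stored-action gadget guarantees that $\straa'$, to avoid $\bot$, must play a fixed action throughout each maximal block of identical observations; and since $\obs'$ reveals only $\obs$ of the first component, $\straa'$ is blind to repetitions, so its decisions depend effectively only on the collapsed stuttering $\collapse$. I would define $\straa$ on a history $\rho$ in $G$ by feeding $\straa'$ a canonical corresponding history in $G'$ (one with no extra $\sharp$-loops), read off its action, and argue this yields a genuine collapsed-stutter-invariant strategy. To transfer the almost-sure guarantee back, I would let player~$2$ in $G'$ simulate the environment of $G$ while never using $\sharp$ except finitely often (or using it to exactly realize any desired scheduling); since $\straa'$ wins almost surely against \emph{all} player-$2$ strategies, in particular against those that avoid infinite $\sharp$-loops, the induced outcome in $G$ is almost-sure winning for $\Phi_\alpha$, using that the $\tuple{s,\ov{x}}$ states have priority~$0$ and thus do not alter the $\liminf$ of priorities when visited only finitely often.

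The main obstacle I anticipate is the careful two-way matching of infinite plays together with their parity values, especially handling the auxiliary $\tuple{s,\ov{x}}$ states and the $\sharp$-loops correctly in the probabilistic setting. I must ensure that the measure-preserving correspondence between plays of $G$ and plays of $G'$ is set up so that an event of measure~$1$ on one side maps to an event of measure~$1$ on the other, which requires showing that the sets of plays with infinite $\sharp$-looping have the right winning status (they are winning for player~$1$ by the priority-$0$ assignment) and that player~$2$'s optimal behavior never relies on such loops. The finite-memory requirement in both directions must also be tracked explicitly, since a naive translation could blow up or lose finiteness; I expect the memory of $\straa'$ to be essentially that of $\straa$ augmented by the bounded stored-action component, so finiteness is preserved, but this needs to be stated cleanly.
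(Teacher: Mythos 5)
Your forward direction matches the paper's: map a play prefix of $G'$ to one of $G$ by deleting the intermediate states $\tuple{s,\ov{x}}$ and projecting the remaining states to their first components, replay $\straa$ on the result, and dispose of infinite $\sharp$-looping via the priority-$0$ assignment. The genuine gap is in your backward direction, specifically in the claim that feeding $\straa'$ a \emph{canonical} lift (no extra $\sharp$-loops) of a history of $G$ ``yields a genuine collapsed-stutter-invariant strategy'' because ``$\straa'$ is blind to repetitions.'' That claim is false: an observation-based strategy in $G'$ observes the full observation \emph{sequence}, whose length grows at every step (the barred intermediate states carry the same observation as their underlying state), so $\straa'$ can perfectly well count how many times an observation has been repeated --- e.g.\ play $a$ at the start of a new block if the previous block had even length and $b$ otherwise. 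The gadget does not make such counting impossible; it only makes it \emph{useless}, because player~$2$ controls the padding via $\sharp$. Consequently, the strategy $\straa$ you define by canonical lifts is indeed almost-sure winning in $G$ (your transfer argument restricted to $\sharp$-free player-$2$ strategies does establish that), but it need not be collapsed-stutter-invariant, which is exactly the property the lemma requires.

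The paper closes this gap with an idea your proposal lacks. Writing $\tuple{\mem',m'_0,\straa'_u,\straa'_n}$ for the transducer encoding $\straa'$, it builds $\straa$ with memory $\mem'\times\Obs$ and, crucially, on a change to a new observation with state $s$ it updates the $\mem'$-component to a value $m'$ such that $m'=\straa'_u(\cdot,s^n)$ for \emph{infinitely many} $n$ (such a recurrent value exists because $\mem'$ is finite); the memory then stays fixed while the observation repeats, so collapsed-stutter-invariance holds by construction. Almost-sure winning is transferred by contradiction: against a would-be spoiling strategy $\strab$ in $G$, player~$2$ in $G'$ uses the $\sharp$-loops to pad each block just long enough to drive the memory of $\straa'$ onto the chosen recurrent value, so that plays under the pair $(\straa',\strab')$ project, with the same probabilities, onto plays under $(\straa,\strab)$. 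Without this recurrent-memory synchronization (or some substitute argument extracting a repetition-independent strategy from $\straa'$), the backward direction of your proof does not go through.
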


\smallskip\noindent{\bf Correctness argument.}
The correctness proof has two directions: first, given a finite-memory almost-sure 
winning collapsed-stutter-invariant strategy in $G$ to construct a similar witness
of observation-based strategy in $G'$, and vice versa (the second direction).
We present both directions below.

\smallskip\noindent{\em First direction.}
For the first direction, given a finite-memory almost-sure winning collapsed-stutter-invariant 
strategy $\straa$ for player~$1$ in $G$,
we construct a finite-memory observation-based strategy $\straa'$ for player~$1$ in $G'$ 
such that $\straa'$ is almost-sure winning in $G'$. 

To define $\straa'(\rho')$ for a play prefix $\rho'$ in $G'$ such that $\last(\rho') \in S'_1$, we construct a
play prefix $\rho$ in the original game $G$ and define $\straa'(\rho') = \straa(\rho)$.
We construct $\rho$ as $\mu(\rho')$ where $\mu$ is a mapping that first removes from $\rho'$ 
all states of the form $\tuple{s,\ov{x}}$ for $s \in S$ and $x \in A_1 \cup \{0\}$, 
and then projects all the other states $\tuple{t,\cdot}$
to their first component $t$. It follows that $\rho = \mu(\rho')$ is a play prefix in $G$ and
if the observation sequences of two prefixes $\rho'_1, \rho'_2$ in $G'$ are the 
same (i.e., $\obs'(\rho'_1) = \obs'(\rho'_2)$), then the collapsed stuttering
of $\mu(\rho'_1)$ and $\mu(\rho'_2)$ is also the same. It follows that 
the constructed strategy $\straa'$ is well defined,
and since $\straa$ is collapsed-stutter-invariant, $\straa'$ is observation-based.

We show that $\straa'$ is almost-sure winning for the parity objective 
$\Phi_{\alpha'}$ in $G'$. 
Consider an arbitrary strategy $\strab'$ for player~$2$ in $G'$.
We can assume without loss of generality that:
\begin{itemize}
\item $\strab'$ is pure since when strategy $\straa'$ is fixed in $G'$,
we get a $1$-player stochastic games, and pure strategies are sufficient 
in $1$-player stochastic games~\cite{CDGH10};

\item no play compatible with $\straa'$ and $\strab'$ 
gets stuck in a state of the form $\tuple{s,\ov{x}}$ for $s \in S$ 
and $x \in A_1 \cup \{0\}$ (i.e., no play 
loops forever through a self-loop on some state $\tuple{s,\ov{x}}$ after some prefix $\rho'$); 
this assumption is also without loss of generality because if $\strab'$ is a 
spoiling strategy (i.e., it ensures against $\straa'$ that the parity objective 
is satisfied with probability less than $1$) and a play gets stuck after some prefix $\rho'$,
then we can define a strategy $\strab''$ that plays arbitrarily after $\rho'$
but does not get stuck, i.e., never plays $\sharp$, since getting stuck forever implies
winning for player~1. 
It follows that $\strab''$ is also a spoiling strategy and never gets stuck.
\end{itemize}

From the strategy $\strab'$ we define a strategy $\strab$ for player~$2$ 
in $G$ (that basically mimics $\strab'$ ignoring the $\sharp$ actions).
It follows by induction that (up to the mapping $\mu$) the probability 
measure over plays in $G$ under strategies $\straa$ and $\strab$ coincides 
with the probability measure over plays in $G'$ under strategies 
$\straa'$ and $\strab'$. Since $\mu$ preserves the satisfaction
of the parity objective, it follows that with probability~$1$ the parity
objective is satisfied in $G'$ under $\straa'$ and $\strab'$,
and thus $\straa'$ is an almost-sure winning observation-based strategy
in~$G'$.

\smallskip\noindent{\em Second direction.}
For the second direction of the lemma, consider that there exists a finite-memory
almost-sure winning observation-based strategy $\straa'$ for player~$1$ in $G'$ 
from state $\tuple{s_0,\ov{0}}$ for the parity objective $\Phi_{\alpha'}$, 
and we show that there exists an observation-based collapsed-stutter-invariant
almost-sure winning observation-based strategy for player~$1$ in $G$ from $s_0$.

Let $\tuple{\mem', m'_0, \straa'_u, \straa'_n}$ be a transducer that  encodes $\straa'$ (thus $\mem'$ is finite).
We construct a transducer $\tuple{\mem, m_0, \straa_u, \straa_n}$ and show that
it encodes an observation-based collapsed-stutter-invariant strategy $\straa$ that is almost-sure 
winning in $G$ from $s_0$ for the parity objective $\Phi_{\alpha}$. Intuitively, given
a memory value $m' \in \mem'$ and a sequence of states with identical observation $o$
visited in the game $G'$, 
the memory will be updated (according to $\straa'_u$) and the actions played (according to $\straa'_n$)
may be different depending on the number of repetitions of the observation $o$. 
To construct a collapsed-stutter-invariant strategy, we update the memory 
to a value that occurs infinitely often in the sequence of memory updates obtained
when observing $o$ repeatedly. This ensures that, as long as the observation
does not change, the constructed strategy plays always the same action. Moreover,
this action could indeed be played by the original strategy (even after an arbitrarily
long sequence of identical observations).  
The transducer for $\straa$ is defined as follows:

\begin{itemize}
\item $\mem = \mem' \times \Obs$. A memory value $m=\tuple{m',o}$ corresponds to 
memory value $m'$ in the transducer for $\straa'$, and the current observation is $o$.
As long as the next observation is $o$, the memory value $m$ does not change
(there is a self-loop on $m$ for all inputs $s$ such that $\obs(s) = o$).

\item $m_0 = \tuple{m', \obs(s_0)}$ where $m'$ is such that $m' = \straa'_u(m'_0, s_0^n)$ for infinitely many $n$,
and $s_0^n = \underbrace{s_0s_0\dots s_0}_{n \text{ times}}$ is the $n$-fold repetition of $s_0$ 
(such $m'$ always exists since $\mem'$ is finite); note that in the definition
of $m'$ we can equivalently replace $s_0^n$ by $s^n$ for any $s$ such that $\obs(s) = \obs(s_0)$
since the strategy $\straa'$ is observation-based.

\item	For all $\tuple{m'_1, o_1} \in \mem$ and $s \in S$, if $\obs(s) = o_1$,
	then $\straa_u(\tuple{m'_1, o_1}, s) = \tuple{m'_1, o_1}$ (self-loop), and 
	if $\obs(s) \neq o_1$, then $\straa_u(\tuple{m'_1, o_1}, s) = \tuple{m'_2, \obs(s)}$ 
	where $m'_2 = \straa'_u(m'_1, s^n)$ for infinitely many $n$.

\item $\straa_n(\tuple{m', o}) = \straa'_n(m')$.
\end{itemize}

First, we show that the strategy $\straa$ is almost-sure winning in $G$ from $s_0$.
The proof of this claim is by contradiction: assume that $\straa$ is not
almost-sure winning. Then there exists a spoiling strategy $\strab$ for player~$2$ such that 
the parity objective $\Phi_{\alpha}$ is satisfied with probability less than~$1$.
From $\strab$, we define a strategy $\strab'$ for player~$2$ in $G'$ intuitively 
as follows: 
the strategy $\strab'$ mimics the strategy $\strab$ when the current state is
of the form $\tuple{s,x}$ where $s \in S_2$ and $x \in A_1 \cup \{0\}$, and 
ensures the following invariant: given any prefix $\rho'$ in $G'$, the memory value of $\straa'$ 
after $\rho'$ is $m'$ if and only if the memory value of $\straa$ 
after $\mu(\rho')$ is of the form $\tuple{m',\cdot}$, where $\mu$ is the mapping defined in the first 
direction of the proof. Player~$2$ can always ensure this invariant as follows:
given the memory value of $\straa$ is updated to $\tuple{m',o}$, repeat the
self-loop on action $\sharp$ sufficiently many times to let the memory value
of $\straa'$ be updated to $m'$, which is always possible since by definition
of the transducer for $\straa$, the value $m'$ is ``hit'' infinitely often
in the transducer for $\straa'$ when a state with observation $o$ is visited forever.
For example, the strategy $\strab'$ stays in the initial state $\tuple{s_0,\ov{0}}$, 
which is a player~$2$ state, for $n$ steps where $n$ is such that 
$m' = \straa'_u(m'_0, s_0^n)$ where $m'$ is such that $\tuple{m', \obs(s_0)}$ is 
the initial memory value of $\straa$. 
It follows from this definition of $\strab'$ that for all play prefixes $\rho'$ 
in $G'$ that are compatible with $\straa'$ and $\strab'$ in $G'$, the play 
prefix $\mu(\rho')$ is compatible with $\straa$ and $\strab$ and has the same
probability as $\rho'$. Therefore the respective probability measures in $G$ and in $G'$
coincide (up to the mapping $\mu$) and since for all infinite plays $\rho$ in $G$
and $\rho'$ in $G'$ such that $\rho=\mu(\rho')$, we have $\rho \in \Phi_{\alpha}$
if and only if $\rho \in \Phi_{\alpha'}$, it follows that $\strab'$ is a spoiling
strategy in $G$ (the parity objective $\Phi_{\alpha'}$ is satisfied with probability 
less than~$1$ under strategies $\straa'$ and $\strab'$). This contradicts
that $\straa'$ is almost-sure winning. Hence the original claim  
that the strategy $\straa$ is almost-sure winning in $G$ holds.

Second, we show that the strategy $\straa$ is observation-based collapsed-stutter-invariant.
We have by induction that $\straa_u(m_0, \rho) = \straa_u(m_0, \rho \cdot s)$
for all play prefixes $\rho$ and states $s$ such that $\obs(s) = \obs(\last(\rho))$, and it follows
that $\straa(\rho) = \straa(\rho')$ if $\collapse(\rho) = \collapse(\rho')$
which concludes the argument.

\smallskip
Since solving partial-observation stochastic parity games with finite-memory 
observation-based strategies is EXPTIME-complete for almost-sure winning~\cite{CDNV14}, 
we get an EXPTIME upper bound for games with observation-based collapsed-stutter-invariant strategy
by the reduction in Lemma~\ref{lem:reduction-collapsed-stutter-invariant}. 
The same complexity results hold for the class of observation stutter-invariant strategies,
by removing all $\sharp$-labelled self-loops for player~$2$ in the reduction
for collapsed-stutter-invariant strategies.
We note that an EXPTIME lower bound can be established for those problems by a 
converse reduction that introduces in every transition an intermediate dummy state
with a different observation, thus two consecutive observations are always different,
and the observation-based strategies are also collapsed-stutter-invariant. 


\begin{theorem}\label{theo:collapsed-stutter-invariant-exptime}
The qualitative problem of deciding whether there exists a finite-memory 
almost-sure winning observation-based collapsed-stutter-invariant (or stutter-invariant) 
strategy in partial-observation stochastic games with parity objectives 
is EXPTIME-complete.
\end{theorem}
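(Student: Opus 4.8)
The plan is to prove EXPTIME-completeness by treating the upper and lower bounds separately, using the known EXPTIME-completeness of the observation-based almost-sure winning problem from~\cite{CDNV14} as the anchor on both sides.

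For the upper bound I would invoke Lemma~\ref{lem:reduction-collapsed-stutter-invariant}: it reduces, in polynomial time, the existence of a finite-memory almost-sure winning observation-based collapsed-stutter-invariant strategy in $G$ to the existence of a finite-memory almost-sure winning observation-based strategy in the transformed game $G'$, whose state space $S \times (A_1 \cup \ov{A}_1 \cup \{0,\ov{0}\}) \cup \{\bot\}$ is only polynomially larger than that of $G$. Since the latter problem is in EXPTIME~\cite{CDNV14}, the collapsed-stutter-invariant problem is in EXPTIME. For the stutter-invariant variant I would reuse the same construction but delete the $\sharp$-labelled player~$2$ self-loops: removing these loops is exactly what prevents player~$2$ from padding a repeated observation an unbounded number of times, which is the single feature distinguishing stutter-invariant from collapsed-stutter-invariant strategies (cf.\ the example of \figurename~\ref{fig:stuttering}). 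The correctness argument is identical, mutatis mutandis, to that of Lemma~\ref{lem:reduction-collapsed-stutter-invariant}, so both variants lie in EXPTIME.

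For the lower bound I would give a converse polynomial-time reduction from the observation-based almost-sure winning problem, which is EXPTIME-hard by~\cite{CDNV14}. Given a partial-observation stochastic parity game $G$ with observation map $\obs$ and index function $\alpha$, I build a game $G''$ by splitting every transition $s \to s'$ through a fresh intermediate state carrying a brand-new observation, distinct from every observation of $G$ and from every other inserted observation. This forces consecutive states along any play to have different observations, so for every prefix $\rho$ we get $\collapse(\rho) = \obs(\rho)$ and the stutter-invariance constraint $\straa(\rho)=\straa(\rho s)$ with $\obs(s)=\obs(\last(\rho))$ is never binding on a genuine play prefix. Hence the three classes---observation-based, observation-based stutter-invariant, and observation-based collapsed-stutter-invariant---coincide on $G''$, and a finite-memory almost-sure winning observation-based strategy exists in $G$ iff a finite-memory almost-sure winning collapsed-stutter-invariant (equivalently stutter-invariant) strategy exists in $G''$. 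To keep the parity winner unchanged I assign each dummy state a priority strictly larger than the maximum priority used by $\alpha$, so a dummy state is never the minimal priority visited infinitely often and thus does not affect $\Phi_{\alpha}$; since real states are visited infinitely often in $G''$ exactly when they are in $G$, the value of the min-parity condition is preserved, and a finite-memory strategy in $G''$ can ignore dummy states in its memory updates so that finite memory transfers in both directions.

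The step I expect to require the most care is the lower-bound correctness: verifying that inserting intermediate states preserves almost-sure winning in both directions while genuinely collapsing the three strategy classes. The dummy priority must be checked against the min-parity condition as above, and one must confirm that the stutter-invariant definition's quantification over \emph{all} $s \in S$ (not merely successors) causes no obstruction---this is handled by observing that the strategy value on non-play-prefixes is irrelevant to the game outcome, so any observation-based strategy can be completed to a stutter-invariant one without altering its behaviour on reachable prefixes. Granting Lemma~\ref{lem:reduction-collapsed-stutter-invariant} and the results of~\cite{CDNV14}, the remaining obligations are routine bookkeeping about the induced probability measures and the finite-memory transducers.
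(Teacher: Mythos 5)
Your overall plan coincides with the paper's: the upper bound invokes Lemma~\ref{lem:reduction-collapsed-stutter-invariant} together with the EXPTIME result of~\cite{CDNV14}, treating the stutter-invariant class by deleting the $\sharp$-labelled player-$2$ self-loops, and the lower bound is a converse reduction inserting intermediate dummy states so that consecutive observations always differ. The upper-bound half is correct and is exactly the paper's argument.

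The lower bound, however, has a genuine flaw in one detail: you give each inserted state ``a brand-new observation, distinct from every observation of $G$ and from every other inserted observation.'' Since a dummy state sits on a specific transition $s \to s'$, its observation identifies that transition and hence reveals to player~$1$ the exact current state and the edge by which it was reached. In $G''$ player~$1$ therefore has, in effect, perfect information, and observation-based strategies in $G''$ are strictly more powerful than observation-based strategies in $G$. The forward direction of your claimed equivalence survives, but the backward direction fails: player~$1$ may win almost-surely in $G''$ while no observation-based strategy wins in $G$. Concretely, let player~$2$ first move invisibly to one of two states $s_1,s_2$ carrying the same observation, after which player~$1$ must play a matching action ($a$ from $s_1$, $b$ from $s_2$) to reach the target, a mismatch being losing; no observation-based strategy in $G$ wins almost-surely, yet in your $G''$ the dummy observation on the edge into $s_1$ versus $s_2$ tells player~$1$ which action to play, so he wins surely. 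A complexity-theoretic warning sign points the same way: a correct reduction of this shape into an essentially perfect-information game would place an EXPTIME-hard problem inside NP $\cap$ coNP (Theorem~\ref{thm:perfect}), since for perfect information finite-memory (indeed memoryless) strategies suffice.

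The repair is small and is what the paper's sketch intends: all dummy states must share a \emph{single} fresh observation $\natural \notin \Obs$ (more generally, a dummy's observation may depend only on data player~$1$ can already observe). Then the observation sequences of plays in $G''$ have the form $o_0\, \natural\, o_1\, \natural\, o_2 \cdots$, which are in bijection with the observation sequences of $G$, so observation-based strategies transfer in both directions; consecutive observations still always differ, so on $G''$ the observation-based, stutter-invariant, and collapsed-stutter-invariant classes coincide; and with your choice of a priority larger than $\max \alpha$ for the dummies the parity winner is preserved. With that correction, the remaining points you raise (priority bookkeeping, the universal quantification over $s$ in the stutter-invariance definition, finite-memory transfer) go through as you describe.
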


\subsection{The Complexity of Qualitative Realizability} We now present the complexity
result for qualitative realizability for DPW specifications via a reduction 
to the problem of deciding the existence of finite-memory almost-sure winning 
collapsed-stutter-invariant strategies in partial-observation games.
The reduction formalizes the intuition described in Remark~\ref{rem:partial}.

\smallskip\noindent{\bf Reduction of synthesis for DPW specifications
to collapsed-stutter-invariant problem.}
The reduction is analogous to the upper-bound reduction presented
in Section~\ref{sec:complexity-results}. Given a library $\L$ of width $D$ and
a DPW, the game we construct is the product of the game $G_{\L}$ with 
the DPW. The states are of the form $\tuple{q,i,p}$ where $\tuple{q,i}$
is a state of $G_{\L}$ and $p$ is a state of the DPW. The third
component $p$ is updated according to the deterministic transition
function $\delta_P$ of the DPW. Thus the successors of $\tuple{q,i,p}$
are of the form $\tuple{\cdot,\cdot,p'}$ where $p' = \delta_P(p,L_i(q))$.
For example, the transitions for player-$2$ states $s=\tuple{q,i,p}$, 
and actions $\sigma \in A_2$ are defined by  
\[
\transg_{\L}(\tuple{q,i,p},\sigma)(\tuple{q',j,p'}) =
\begin{cases}
\delta_i(q,\sigma)(q') &  \text{if } i = j \text{ and } p' = \delta_P(p,L_i(q)) \\
0 & \text{otherwise}
\end{cases} 
\]

The index function in the game is defined according to the 
third component of the states and according to the index function $\alpha_P$ of the DPW,
thus $\alpha_G(\tuple{q,i,p})=\alpha_P(p)$.
The observation mapping is defined by $\obs(\tuple{q,i,p}) = i$ if 
$q \not\in F_i$, and $\obs(\tuple{q,i,p}) = \tuple{q,i}$ if $q \in F_i$. 
Thus the state of the DPW is not observable, and only the components name and
exit states are observable. Note that $\Obs = [k] \cup \bigcup_{i=0}^k (F_i \times \set{i})$
where $[k]=\set{0,1,2,\ldots,k}$ and the number of components is $k+1$.
The correctness argument is established using similar arguments as in Section~\ref{sec:complexity-results},
by showing that a composer for $\L$ can be mapped to a collapsed-stutter-invariant strategy in $G_{\L}$
that is almost-sure winning, and vice versa we can construct a composer for $\L$
from an almost-sure winning collapsed-stutter-invariant strategy in $G_{\L}$
by the inverse mapping.

This reduction and Theorem~\ref{theo:collapsed-stutter-invariant-exptime}
show that the realizability problem with DPW specifications can be solved in
EXPTIME, and an EXPTIME lower bound is known for this problem~\cite{AK14}.

\begin{theorem}
The qualitative realizability problem for controlflow composition with 
DPW specifications is EXPTIME-complete.
\end{theorem}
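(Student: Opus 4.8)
The plan is to prove the two matching bounds separately: the upper bound follows from the product reduction sketched immediately above together with Theorem~\ref{theo:collapsed-stutter-invariant-exptime}, and the lower bound is quoted from~\cite{AK14}. For the \emph{upper bound} I would work with the partial-observation stochastic parity game $G$ obtained as the product of $G_{\L}$ with the DPW $A$, whose states $\tuple{q,i,p}$ carry a component state $q$, a component index $i$, and a DPW state $p$ updated deterministically by $\delta_P$ along the output labels $L_i(q)$. The observation $\obs(\tuple{q,i,p})$ exposes only the component index $i$ at internal states and the pair $\tuple{q,i}$ at exit states, hiding both the DPW state and the internal run, while the parity index is read off the DPW component via $\alpha_G(\tuple{q,i,p})=\alpha_P(p)$. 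The crux is a correctness lemma: there is a composer $C$ with $\val(\T_C,A)=1$ \emph{iff} player~$1$ has a finite-memory almost-sure winning observation-based collapsed-stutter-invariant strategy in $G$ from the initial state. Granting this, Theorem~\ref{theo:collapsed-stutter-invariant-exptime} immediately yields an EXPTIME decision procedure.

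To establish the correctness lemma I would argue both directions by a strategy/composer translation, parallel to the composer-to-strategy and valid-strategy-to-composer constructions of Section~\ref{sec:complexity-results}. In the forward direction, a composer $C$ chooses the next component only from the history of components entered and exits taken, which is exactly the information retained in $\collapse(\rho)$; hence $C$ induces a collapsed-stutter-invariant strategy $\straa$, and since $C$ is a finite deterministic transducer this $\straa$ is finite-memory and pure. Conversely, a finite-memory almost-sure winning collapsed-stutter-invariant strategy depends on a prefix only through its sequence of \emph{distinct} observations, i.e.\ through the component-and-exit history, so its choice at each exit can be read back as the transition function $\Delta$ of a composer (the strategy may be taken pure, as randomization does not help for almost-sure winning here). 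In both directions one checks that player~$2$ in $G$ ranges over exactly the environment behaviours in $\T_C$, so the induced probability measures coincide, and the parity objective on the $p$-component holds with probability~$1$ precisely when the generated traces lie in $\lambda^{-1}(L_A)$ with probability~$1$; this gives $\val(\T_C,A)=\val^{G}(\straa,\Phi_{\alpha_G})$ for the matched pair.

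The \emph{main obstacle} is making the stuttering bookkeeping airtight. Because the composer observes neither the internal execution nor the number of steps spent inside a component, the correct strategy class is \emph{collapsed}-stutter-invariant rather than merely stutter-invariant; I would therefore verify that hiding the step count in the observation (not just the DPW state) is both necessary and exactly captured by $\collapse$, using the separating example of \figurename~\ref{fig:stuttering} as the sanity check. A secondary point to discharge is that the compatibility of the composer with the exit control relation is folded into the DPW, which is legitimate under the convention that distinct exit states do not share an output letter. Finally, for the \emph{lower bound} I would simply invoke the known EXPTIME-hardness of this realizability problem from~\cite{AK14}, which together with the EXPTIME membership above yields EXPTIME-completeness.
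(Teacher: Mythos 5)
Your proposal follows essentially the same route as the paper: the same product construction of $G_{\L}$ with the DPW (same observation mapping hiding the DPW state and internal run, same index function $\alpha_G(\tuple{q,i,p})=\alpha_P(p)$), the same composer-to-strategy and strategy-to-composer correctness argument mirroring Section~\ref{sec:complexity-results}, the appeal to Theorem~\ref{theo:collapsed-stutter-invariant-exptime} for membership in EXPTIME, and the lower bound quoted from~\cite{AK14}. The proposal is correct and matches the paper's proof, including the side remarks on encoding the exit control relation in the DPW and on collapsed stuttering being the right strategy class.
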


\subsection{Undecidability of the Quantitative Realizability}
In this section we establish undecidability of the quantitative
realizability problem by a reduction from the quantitative decision 
problem for probabilistic automata (which is undecidable).

\smallskip\noindent{\bf Probabilistic automata.}
A probabilistic automaton $\aut=\tuple{\Sigma, Q, q_0, \delta }$ is a 
probabilistic transducer without outputs and exit states, i.e., $\Sigma$ 
consists of the input letters, $Q$ is the finite state space with initial state
$q_0$, and $\delta:Q \times \Sigma \to \dist(Q)$ is the probabilistic 
transition function. 
Consider a probabilistic automaton $\aut$ with an index function $\alpha$ on 
$Q$. A word is an infinite sequence of letters from $\Sigma$, and a 
\emph{lasso-shaped} word $w=w_1 (w_2)^\omega$ consists of a finite word 
$w_1$ followed by an infinite repetition of a non-empty finite word $w_2$.
Given a probabilistic automaton $\aut$ with index function $\alpha$, 
the quantitative decision problem of whether there exists a lasso-shaped word 
that is accepted with probability at least $\eta$ is undecidable, for rational
$\eta \in (0,1)$ given as input~\cite{PazBook},
and the undecidability proof holds even for index function for reachability, 
B\"uchi, or coB\"uchi objectives.

\smallskip\noindent{\bf The key ideas of reduction.} 
The key ideas of the reduction are as follows.
We consider a component for each letter of the input alphabet, and each
component has a unique exit (i.e., $|D|=1$). 
Hence a composer represents a choice of word, and since the state space of a 
composer is finite, a composer represents a lasso-shaped word. 
Conversely, for every lasso-shaped word there is a composer.
In each component, in the starting state, there is a choice by the environment 
among the states of the probabilistic automaton, and given the choice of a state, the 
probabilistic transition is executed according to the current state and choice
of letter (represented by the choice component), and finally there is a 
transition to the unique exit state.
To ensure that the choice in each component really chooses the correct state
of the probabilistic automaton we use the DPW.
The DPW keeps track of the current state of the probabilistic automaton, and 
requires that the choice from the starting state of the component matches
the current state (otherwise it accepts immediately).

\smallskip\noindent{\bf The reduction.}
Consider a probabilistic automaton $\aut=\tuple{\Sigma, Q, q_0, \delta }$ with an
index function $\alpha$ on $Q$. 
Let $\Sigma=\set{0,1,2,\ldots,k}$. 
We construct a library $\L$ with $k+1$ components $M_0, M_1, \ldots, M_k$ each with 
a unique exit as follows.
We have $M_i=\tuple{\Sigma_I,\Sigma_O, Q_i, q_0^{i}, \delta_i,F_i,L_i}$ where 
the components are as follows:
\begin{compactenum}
\item $\Sigma_I=Q$;  and $\Sigma_O=Q \cup \set{\$}$; 
\item $Q_i=\set{q_0^{i}} \cup (Q \times\set{1,2}) \cup \set{\ex_i}$;
\item $F_i=\set{\ex_i}$; 
\item $L_i(q_0^{i})=L_i(\ex_i)= \$$ and $L_i(\tuple{q,j})=q$ for $q \in Q$ and 
$j \in \set{1,2}$;
and 
\item (a)~$\delta_i(q_0^{i},\sigma)(\tuple{\sigma,1})=1$ 
(i.e., given the start state and an input letter $\sigma=q$ corresponding to a
state of $Q$, the next state is $\tuple{q,1}$); 
(b)~$\delta_i(\tuple{q,1},\sigma)(\tuple{q',2})= \delta(q,i)(q')$ 
(i.e., irrespective of the input choice $\sigma$, the second component changes
from $1$ to $2$, and the first component changes according to the transition
function $\delta$ of 
$\aut$ for the choice of input letter $i$); 
and (c)~$\delta_i(\tuple{q,2},\sigma)(\ex_i)=1$ (i.e., irrespective of choice of 
$\sigma$ the next state is the unique exit state).
\end{compactenum}

\smallskip\noindent{\bf The DPW.} We now describe the DPW along with the 
library of components.
The DPW has alphabet $Q\cup \set{\$}$ and state space 
$(Q\times\set{0,1,2}) \cup \set{\top}$.
The transition function $\delta_P$ is as follows:  
\begin{compactenum}
\item $\delta_P(\tuple{q,0},\sigma)=\tuple{q,1}$ 
if $\sigma=q$, else if $\sigma\neq q$, then $\delta((q,0),\sigma)=\top$ (i.e.,
in a state where the second component is~0 the automaton expects to read the 
same input as the first component, and if it reads so it changes the 
second component from~0 to~1, otherwise it goes to the $\top$ state). 
This step corresponds to reading the choice from the start state of a component.

\item $\delta_P(\tuple{q,1},\sigma)=\tuple{\sigma,2}$ 
(i.e., when the second component is~1 it updates the first component according
to the transition read, and the second component changes from~1 to~2).
This step corresponds to reading the transition in a component that mimics the 
transition of the probabilistic automaton. 

\item $\delta_P(\tuple{q,2},\sigma)=\tuple{q,0}$ 
(i.e., irrespective of the input, the first component remains the same, and 
the second component changes from~2 to~0).
This step corresponds to reading the transition to the exit state in a 
component.
\end{compactenum}
To be very precise, one also needs to add more states in the DPW for transition
from the exit state of a component to the start state of the next component 
(which is omitted for simplicity).
The state $\top$ is an absorbing state (irrespective of the input the next 
state is $\top$ itself).
The index function $\alpha_P$ for the DPW maps according to the index function 
of $\alpha$ and the first component, i.e., 
for $\tuple{q,i}$ where $q \in Q$ and $i\in \set{0,1,2}$ we have 
$\alpha_P(\tuple{q,i})=\alpha(q)$; and $\alpha_P(\top)=0$.

\smallskip\noindent{\bf Correctness argument.}
Given the probabilistic automaton $\aut$ with index function $\alpha$, let
$\Phi_{\alpha}$ be the corresponding parity objective.
For a lasso-shaped word $w$, let $\Prb^w(\Phi_{\alpha})$ denote the probability 
that the word satisfies the parity objective.
Given a composer $C_w$ that corresponds to a lasso-shaped word $w$, consider
a strategy of the environment that given the starting state of a component 
always chooses the state according to the first component of the current state
of the DPW. 
Then it follows that the probability distribution of $\aut$ is executed, and
hence we have $\Prb^w(\Phi_{\alpha}) \leq \val(\T_{C_w}, \Phi_{\alpha_P})$, where $\Phi_{\alpha_P}$ is
the parity objective induced by the DPW. 
Conversely, if the environment does not choose according to the current state
of the DPW, then the DPW immediately accepts.
It follows that  $\Prb^w(\Phi_{\alpha}) \geq \val(\T_{C_w}, \Phi_{\alpha_P})$, and thus 
we have  $\Prb^w(\Phi_{\alpha}) = \val(\T_{C_w}, \Phi_{\alpha_P})$.
Hence the answer to the quantitative realizability problem is YES iff the 
answer to the quantitative decision problem for $\aut$ is YES.
We have the following result.

\begin{theorem}
The quantitative realizability problem for controlflow composition with 
DPW specifications is undecidable.
\end{theorem}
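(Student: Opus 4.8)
The plan is to reduce the undecidable quantitative emptiness problem for probabilistic automata over lasso-shaped words to the quantitative realizability problem for DPW specifications, as the theorem statement and the reduction paragraph indicate. First I would take as given the undecidability result (cited as \cite{PazBook}): for a probabilistic automaton $\aut = \tuple{\Sigma, Q, q_0, \delta}$ with an index function, deciding whether there exists a lasso-shaped word $w = w_1 (w_2)^\omega$ accepted with probability at least $\eta$ is undecidable for a given rational $\eta \in (0,1)$. The key structural fact I would exploit is the correspondence between lasso-shaped words and composers: since a composer has a finite set of states and the control flow is deterministic, the sequence of components it selects is eventually periodic, hence a composer over the library $\L$ (which has one component $M_i$ per letter $i \in \Sigma$, each with a unique exit so that $|D|=1$) represents exactly a lasso-shaped word $w$, and conversely every lasso-shaped word yields a composer $C_w$.

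The heart of the argument is the equality $\Prb^w(\Phi_{\alpha}) = \val(\T_{C_w}, \Phi_{\alpha_P})$ sketched in the correctness paragraph, and I would establish it via two inequalities. For the upper bound $\Prb^w(\Phi_{\alpha}) \leq \val(\T_{C_w}, \Phi_{\alpha_P})$, I would exhibit a particular environment strategy: whenever control enters the start state $q_0^i$ of a component, the environment picks the input letter equal to the first component of the current DPW state $\tuple{q,\cdot}$. Under this strategy the DPW never jumps to $\top$, the simulated transition $\delta_i(\tuple{q,1},\sigma)(\tuple{q',2}) = \delta(q,i)(q')$ faithfully executes one step of $\aut$ on letter $i$, and so the induced distribution over state-sequences of $\aut$ matches the run of $\aut$ on $w$; hence the value is at least the acceptance probability $\Prb^w(\Phi_{\alpha})$. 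For the reverse inequality, I would argue that any deviation by the environment from this ``honest'' choice forces the DPW into the absorbing accepting state $\top$ (which has even priority $0$, hence is winning), so no environment strategy can do better against the composer than the honest one; thus $\Prb^w(\Phi_{\alpha}) \geq \val(\T_{C_w}, \Phi_{\alpha_P})$. Combining gives equality, and it follows that the quantitative realizability instance answers YES iff $\aut$ accepts some lasso-shaped word with probability at least $\eta$, which is undecidable.

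The main obstacle I anticipate is making the reverse inequality airtight, since the environment is a genuine adversary minimizing the value and can in principle mix honest and dishonest choices across different components and across different visits in the periodic part of the lasso. I would handle this by a careful invariant argument on the product of $\T_{C_w}$ with the DPW: one shows that along any play, the second coordinate of the DPW state cycles through $0 \to 1 \to 2 \to 0$ in lockstep with the phases of each component ($q_0^i \to \tuple{q,1} \to \tuple{q,2} \to \ex_i$), so the only place an environment input is ``tested'' against the DPW is the start-state choice, and a mismatch there is irrevocable (it leads to $\top$, which is accepting). Consequently, restricting the environment to honest choices can only decrease the acceptance probability, which justifies that the minimizing environment is exactly the honest one and yields the bound. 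A secondary point to verify is the handling of the transitions from a component's exit $\ex_i$ to the next component's start state, which the excerpt admits is omitted for simplicity; I would add the requisite padding states and priorities in the DPW so that these control-transfer steps are priority-neutral (assigned a high odd priority that is dominated, or simply not affecting the $\liminf$ parity since they occur between faithful simulation steps) and do not interfere with the parity evaluation inherited from $\alpha$. With these details in place, the equality and hence the undecidability conclusion follow.
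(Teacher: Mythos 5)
Your proposal is correct and follows essentially the same route as the paper: a reduction from the (undecidable) lasso-word quantitative decision problem for probabilistic automata, using one component per letter with a single exit so that finite composers correspond exactly to lasso-shaped words, a DPW that tracks the automaton state and jumps to the accepting sink $\top$ on any dishonest environment choice, and the key equality $\Prb^w(\Phi_{\alpha}) = \val(\T_{C_w}, \Phi_{\alpha_P})$. Your extra care with the minimizing environment (deviation only increases acceptance probability, so the honest strategy is optimal for the minimizer) and with the omitted exit-to-start padding states merely fleshes out details the paper leaves implicit.
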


\bibliographystyle{plain}
\bibliography{biblio}

\end{document}